\newtheorem{theorem}{Theorem}[section]
\newtheorem{lem}[theorem]{Lemma}
\newtheorem{prop}[theorem]{Proposition}
\newtheorem{corollary}[theorem]{Corollary}
\newtheorem{conjecture}{Conjecture}[section]
\newtheorem{claim}{Claim}
\begin{document}

\title{On a conjecture of Mohar concerning\\ Kempe equivalence of regular graphs}

\author{
	Marthe Bonamy\thanks{CNRS, LaBRI, Bordeaux, France. \newline email: \texttt{marthe.bonamy@labri.fr}}, \quad
    Nicolas Bousquet\thanks{CNRS, G-SCOP, Grenoble, France. \newline email: \texttt{nicolas.bousquet@ec-lyon.fr}},\\
	Carl Feghali\thanks{School of Engineering and Computing Sciences, Durham University, United Kingdom. \newline emails: \texttt{\{carl.feghali,matthew.johnson2\}@durham.ac.uk}}, \quad
	Matthew Johnson\footnotemark[3]
}
\date{}

\maketitle

\begin{abstract}
Let $G$ be a graph with a vertex colouring $\alpha$.  Let $a$ and $b$ be two colours.  Then a connected component of the subgraph induced by those vertices coloured either $a$ or $b$ is known as a Kempe chain.  A colouring of $G$ obtained from $\alpha$ by swapping the colours on the vertices of a Kempe chain is said to have been obtained by a Kempe change.  Two colourings of $G$ are Kempe equivalent if one can be obtained from the other by a sequence of Kempe changes.

A conjecture of Mohar (2007) asserts that, for $k \geq 3$, all $k$-colourings of a $k$-regular graph that is not complete are Kempe equivalent.  It was later shown that all $3$-colourings of a cubic graph that is neither~$K_4$ nor the triangular prism are Kempe equivalent. In this paper, we prove that the conjecture holds for each $k\geq 4$.  We also report the implications of this result on the  validity of the Wang-Swendsen-Koteck\'{y} algorithm for the antiferromagnetic Potts model at zero-temperature.
\end{abstract}

\section{Introduction}

Let $G = (V, E)$ denote a simple undirected graph.  Let $k$ be a positive integer.   Then a \emph{$k$-colouring} of~$G$ is a function $\alpha: V \rightarrow \{1, \dots, k\}$ such that for each edge $uv \in E$,  $\alpha(u) \not= \alpha(v)$.  We call $\{1, \dots, k\}$ the set of colours and refer to~$\alpha(u)$ as the colour of the vertex $u$. For a colouring~$\alpha$ and colours~$a$ and~$b$, $G_{\alpha}(a, b)$ is the subgraph of $G$ induced by vertices with colour~$a$ or~$b$. A connected component of $G_{\alpha}(a, b)$ is known as an \emph{$(a,b)$-component} of $G$ and~$\alpha$.  These components are also referred to as \emph{Kempe chains}. If a colouring $\beta$ is obtained from a colouring $\alpha$ by exchanging the colours $a$ and~$b$ on the vertices of an $(a,b)$-component of $G$ and $\alpha$, then $\beta$ is said to have been obtained from~$\alpha$ by a \emph{Kempe change}.  A pair of colourings are \emph{Kempe equivalent} if each can be obtained from the other by a sequence of Kempe changes.  A set of Kempe equivalent colourings is called a \emph{Kempe class}.

A graph is \emph{$k$-regular} if every vertex has degree $k$. By Brooks' Theorem~\cite{brooks}, a $k$-regular connected graph has a $k$-colouring unless it is a complete graph or a cycle on an odd number of vertices.  Mohar conjectured that for all other $k$-regular graphs, the set of $k$-colourings form a Kempe class~\cite{mohar1}; that is, any possible colouring can be obtained from an initial colouring by a series of Kempe changes.  The first non-trivial case is when $k=3$ and van den Heuvel~\cite{heuvel} showed that there is a counterexample to the conjecture, the triangular prism (the graph obtained by joining the vertices of two vertex-disjoint triangles by a perfect matching, see Figure~\ref{fig:prism} below and the discussion at the end of this section). 

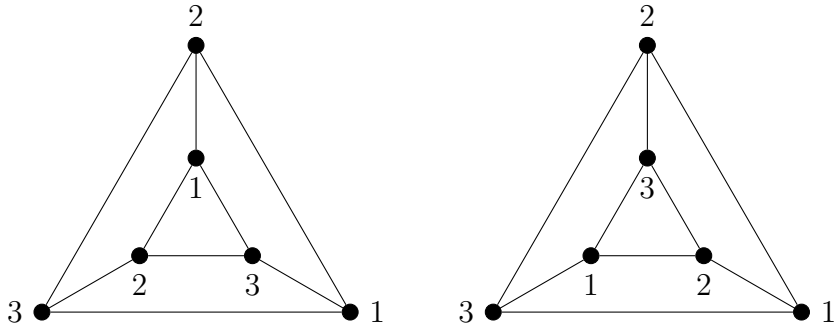
\begin{figure}[!h]
\center
\begin{tikzpicture}
    \tikzstyle{whitenode}=[draw,circle,fill=white,minimum size=9pt,inner sep=0pt]
    \tikzstyle{blacknode}=[draw,circle,fill=black,minimum size=6pt,inner sep=0pt]
\draw (0,0) node[blacknode] (c) [label=-90:$2$] {} 
 ++(0:1.5cm) node[blacknode] (b) [label=-90:$3$] {}
 ++(120:1.5cm) node[blacknode] (a) [label=-90:$1$] {};
 
 \draw (a)
-- ++(90:1.5cm) node[blacknode] (a2) [label=90:$2$] {};
  \draw (b)
-- ++(-30:1.5cm) node[blacknode] (b2) [label=0:$1$] {};
  \draw (c)
-- ++(-150:1.5cm) node[blacknode] (c2) [label=180:$3$] {};
 
 \draw (a) -- (b);
 \draw (c) -- (b);
 \draw (a) -- (c);
 \draw (a2) -- (b2);
 \draw (c2) -- (b2);
 \draw (a2) -- (c2);
 
 \draw (6,0) node[blacknode] (c) [label=-90:$1$] {} 
 ++(0:1.5cm) node[blacknode] (b) [label=-90:$2$] {}
 ++(120:1.5cm) node[blacknode] (a) [label=-90:$3$] {};
 
 \draw (a)
-- ++(90:1.5cm) node[blacknode] (a2) [label=90:$2$] {};
  \draw (b)
-- ++(-30:1.5cm) node[blacknode] (b2) [label=0:$1$] {};
  \draw (c)
-- ++(-150:1.5cm) node[blacknode] (c2) [label=180:$3$] {};
 
 \draw (a) -- (b);
 \draw (c) -- (b);
 \draw (a) -- (c);
 \draw (a2) -- (b2);
 \draw (c2) -- (b2);
 \draw (a2) -- (c2);

\end{tikzpicture}
\caption{The triangular prism with two non-Kempe-equivalent $3$-colourings.}
\label{fig:prism}
\end{figure}

 Recently Feghali et~al.~\cite{FJP15} showed that, for \mbox{$k=3$}, this is the only counterexample: for a non-complete 3-regular connected graph~$G$, the 3-colourings of $G$ form a Kempe class \emph{unless} $G$ is the triangular prism.  In this paper, we affirm the conjecture for larger~$k$.

\begin{theorem} \label{thm:main}
Let $k \geq 4$ be a positive integer.  If $G$ is a connected non-complete $k$-regular graph,  then the set of $k$-colourings of $G$ is a Kempe class.
\end{theorem}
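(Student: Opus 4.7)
The plan is to proceed by induction on $|V(G)|$, with the base cases being the smallest non-complete connected $k$-regular graphs for each $k \geq 4$. These small cases, while potentially numerous, can be handled by direct case analysis or explicit Kempe sequences. For the inductive step, given two $k$-colourings $\alpha, \beta$ of $G$, the strategy is to show both are Kempe equivalent to a common canonical colouring (e.g., one minimising a fixed lexicographic order on the colour classes), thereby reducing a symmetric equivalence problem to a one-sided reachability question.

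The key enabling observation is that in any $k$-colouring of a $k$-regular graph, every vertex $v$ has $k$ neighbours but only $k-1$ colours available to them (they must all avoid $v$'s own colour), so at least two neighbours of $v$ share a colour. This guarantees a non-trivial Kempe chain through $v$ for each pair of colours represented twice among $v$'s neighbours. The slack created by this pigeonhole phenomenon is the chief reason we expect the proof for $k \geq 4$ to avoid the exceptional triangular prism that appears for $k = 3$.

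Concretely, the inductive step proceeds as follows. First, identify a ``removable'' vertex $v$ --- ideally a non-cut vertex of $G$ whose local structure is benign. Second, prove a local lemma: via Kempe changes, any colouring of $G$ can be transformed so that $v$ takes any prescribed colour. Third, transform both $\alpha$ and $\beta$ to agree on $v$, delete $v$, and apply induction to the smaller graph --- this likely requires strengthening the inductive hypothesis to a list-colouring or degree-bounded framework, since $G - v$ is no longer $k$-regular. An alternative to outright deletion is to ``fold'' or identify vertices in the neighbourhood of $v$ (those sharing a colour), producing a smaller graph on which the induction can operate.

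The main obstacle I foresee is the local recolouring lemma, particularly for highly symmetric graphs such as $K_{k+1}$ minus a matching, $K_{k,k}$, and related strongly regular or vertex-transitive $k$-regular graphs, where Kempe chains are tightly constrained and standard swaps may fail to alter the colour of $v$. A secondary difficulty is setting up the induction correctly: since deleting a vertex destroys $k$-regularity, the inductive statement must probably be reformulated so that it applies to a class of graphs closed under the reduction used (for instance, a list-colouring version in which each vertex $u$ has a list $L(u)$ of size at least $\deg(u)$). Striking the right balance between a hypothesis strong enough to close the induction and weak enough to remain true will, I expect, be the most delicate part of the argument.
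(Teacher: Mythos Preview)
Your proposal is a plan rather than a proof, and the plan has a structural gap at its core. The key difficulty is that Kempe equivalence in $G-v$ does not lift to Kempe equivalence in $G$: once you arrange $\alpha(v)=\beta(v)=c$ and attempt a Kempe change in $G-v$ on some $(c,d)$-component $K$, the corresponding chain in $G$ may well include $v$ (whenever $v$ has a neighbour in $K$), and swapping it recolours $v$, destroying the agreement you just established. Your suggested fix of strengthening the hypothesis to a list version with $|L(u)|\ge\deg(u)$ addresses \emph{colourability}, not reconfiguration; it supplies no mechanism for turning a Kempe sequence in the reduced object into one in $G$. Your alternative of identifying two like-coloured neighbours $t_1,t_2$ is closer to something viable, but you would still need the lemma that Kempe equivalence in $G_{t_1+t_2}$ implies Kempe equivalence in $G$ among colourings with $t_1,t_2$ coloured alike --- and that lemma genuinely requires $3$-connectivity, so an induction that ignores connectivity cannot close.

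The paper does not induct on $|V(G)|$ at all. It splits into three cases --- not $3$-connected; $3$-connected of diameter at least $3$; $3$-connected of diameter $2$ --- and in the $3$-connected cases the engine is precisely the identification lemma just mentioned (imported from earlier work): for non-adjacent $t_1,t_2$ with a common neighbour, the set of colourings with $t_1,t_2$ coloured alike already forms a Kempe class. Since, by your pigeonhole observation, every $k$-colouring lies in some such class $C^k_G(t_1,t_2)$ for an eligible pair around a fixed vertex $u$, the task becomes gluing these classes together by exhibiting one pair $(w_1,w_2)$ whose class meets them all. Establishing that intersection --- i.e.\ finding a $k$-colouring with both $t_1,t_2$ alike and $w_1,w_2$ alike --- is where the substantive structural work lies, and it requires a careful analysis (via degree-choosability and block decompositions) of what obstructions can arise. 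None of that architecture is present in your outline.
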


We consider only connected graphs as other graphs can be considered componentwise.  Notice that we need not have included the condition that~$G$ is not complete since one can say that if a graph has no $k$-colourings, then this set of colourings (the empty set) is a Kempe class, but it is neater to exclude this case.  We will note in Corollary~\ref{cor:final} that this result about $k$-regular graphs can easily be extended to an analogous result on graphs of maximum degree~$k$.
 
Let us describe another way to think of Theorem~\ref{thm:main}.  Let $C_G^k$ be the set of $k$-colourings of a graph $G$.  Let $\mathcal{K}_k(G)$ be the graph that has vertex set $C_G^k$ and an edge between two vertices $\alpha$ and~$\beta$ whenever the colouring~$\beta$ can be obtained from $\alpha$ by a single Kempe change.  Theorem~\ref{thm:main} states that, for $k \geq 4$, for any connected non-complete $k$-regular graph $G$, $\mathcal{K}_k(G)$ is connected.

We might call $\mathcal{K}_k(G)$ a \emph{solution graph}; it represents all possible solutions to the problem of finding a $k$-colouring of $G$.  Or we can call it the \mbox{\emph{reconfiguration graph}} of $k$-colourings of $G$ and refer to Kempe changes as reconfiguration steps.  In fact, reconfiguration graphs of $k$-colourings have been much studied when the edge relation is defined by the alternative reconfiguration step of \emph{trivial} Kempe changes. A Kempe chain is trivial if it contains a single vertex $v$ and the corresponding Kempe change alters the colour of only~$v$, and so pairs of colourings are connected in these reconfiguration graphs if they disagree on only one vertex.  These graphs were introduced in~\cite{CHJ06}.  Much work on these graphs has focussed on (the computational complexity of) deciding whether or not the reconfiguration graph is connected~\cite{CHJ06a}, on deciding whether a given pair of colourings belong to the same connected component~\cite{BC09,BM14, CHJ06b,JKKPP14}, and on the diameter of the reconfiguration graph or its components~\cite{BB13,BJLPP14,BP14,FJP14}.  Similar work has been done for reconfiguration graphs for search problems other than graph colouring; see, for example, the survey of van den Heuvel~\cite{heuvel}.  

Reconfiguration graphs defined by Kempe changes have received less attention. Kempe changes were introduced in 1879 by Kempe in his attempted proof of the Four Colour Theorem~\cite{kempe79}.  Though this proof was fallacious, the Kempe change technique has proved useful in, for example, the proof of the Five Colour Theorem and a short proof of Brooks' Theorem~\cite{vizing}.  We briefly review the purely graph theoretical studies of Kempe equivalence.  Fisk~\cite{fisk} showed in 1977 that the $4$-colourings of a Eulerian triangulation of the plane are a Kempe class.  This was generalized both by Meyniel~\cite{meyniel1}, who showed that the $5$-colourings of a planar graph are a Kempe class, and  by Mohar~\cite{mohar1}, who proved that the $k$-colourings of a planar graph~$G$ are a Kempe class if $k > \chi(G)$. The former result was further extended by Las Vergnas and Meyniel~\cite{meyniel3} who showed that the $5$-colourings of a $K_5$-minor free graph are a Kempe class. Bertschi~\cite{marc} proved that the $k$-colourings of a perfectly contractile graph are a Kempe class (a graph is perfectly contractile if, from each of its induced subgraphs, a complete graph can be obtained by repeatedly contracting pairs of vertices that have no odd-length induced path between them).  The Kempe equivalence of edge-colourings has also been investigated~\cite{ruth, mohar2, mohar1}.

Applications of the Kempe change technique can be found in theoretical physics~\cite{mohar3, mohar4, sokal, wang1, wang2} and in the study of Markov chains~\cite{vigoda} and timetables~\cite{rolf}. In particular, our result implies that the Wang-Swendsen-Koteck\'{y} (WSK) algorithm  for the zero-temperature $q$-state Potts antiferromagnetic model is ergodic on the triangular lattice with periodic boundary conditions (that is, embedded on the torus) for $q = 6$, and on the Kagom\'e lattice with periodic boundary conditions for $q = 4$, thus answering some of the questions raised in~\cite{mohar3, mohar4}.  We discuss this in Section~\ref{section:app}.

In Section~\ref{section:pre} we introduce some useful lemmas.  In Section~\ref{section:final}, we prove Theorem~\ref{thm:main}. In Section~\ref{sec:final}, we present a corollary and some closing remarks. We conclude this section with some final comments on our investigations towards proving Theorem~\ref{thm:main}.  We now know that, for $k \geq 3$, the only non-complete connected $k$-regular graph whose $k$-colourings are not a Kempe class is the triangular prism.  Let us explain why its 3-colourings are not all Kempe equivalent. Notice from Figure~\ref{fig:prism} that no Kempe change modifies the colour partition.  Thus the two 3-colourings illustrated are not Kempe equivalent as they are not the same up to colour permutation.

So one might have hoped to find a counterexample to Theorem~\ref{thm:main} by finding, for some $k \geq 4$, a connected non-complete $k$-regular graph with a $k$-colouring such that all Kempe changes maintain the colour partition.  However, it is not hard to convince oneself that such a graph does not exist. Indeed, let us consider such a graph $G$ and $k$-colouring $\alpha$ and obtain a contradiction.  As $G$ has more than $k$ vertices some colour $a$ appears on more than one vertex.  If a colour $b$ does not appear on any vertex, then changing the colour of one vertex from $a$ to $b$ gives a colouring with a different partition.  And, if $b$ appears on only one vertex $u$, then changing the colour of a vertex not adjacent to $u$ to $b$ again changes the partition.  So every colour appears on at least two vertices.  If, for any vertex $u$, there is a colour other than~$\alpha(u)$ that does not appear in its neighbourhood, then another trivial Kempe change gives a colouring with a different partition; so on the $k$ vertices in the neighbourhood of $u$, one colour appears twice and every other colour but $\alpha(u)$ appears once.  For every pair of colours $a$ and $b$, $G_{\alpha}(a,b)$ is connected (else a Kempe change of one component gives a different partition). Since $k\geq 4$, there are at least four distinct colours $a$, $b$, $c$ and~$d$.  As every vertex in $G_{\alpha}(a,b)$ has degree 1 or 2, it is either a path or a cycle.  As there are at least two vertices coloured $a$, there is a vertex $u$ coloured $c$ that has degree 2 in $G_{\alpha}(a,c)$.  Similarly, there is a vertex $v$ coloured~$c$ that has degree~2 in $G_{\alpha}(b,c)$; clearly $u \neq v$.  Notice that $u$ and $v$ must both have degree~1 in $G_{\alpha}(c,d)$; that is $G_{\alpha}(c,d)$ is a path whose endvertices are both coloured $c$.  But, by the same argument, $G_{\alpha}(c,d)$ is a path whose endvertices are both coloured~$d$.  This contradiction proves that such a $G$ and $\alpha$ cannot exist.

\section{Ergodicity of the WSK algorithm}\label{section:app}

The $q$-state Potts models~\cite{potts1, potts2, potts3} are among the simplest and most studied models in statistical mechanics with various applications from the theory of critical phenomena to condensed-matter systems. The model uses a finite graph $G =(V, E)$ where each vertex $v \in V$ is assigned a \emph{spin} $\sigma(v) \in \{1, \dots, q\}$ (that is, the spins provide a not necessarily proper vertex-colouring of~$G$).  Furthermore, the $q$-state Potts models are dynamic models where the spin state of a vertex can be modified over time with probabilities depending on the spin states of adjacent vertices. There exist two main Potts models. In the \emph{ferromagnetic Potts model}, the state of a spin is attracted to the spin states of adjacent vertices (equilibria correspond to monochromatic graph colourings).  We are concerned with the \emph{antiferromagnetic Potts model} where the state of a spin is repelled by the spin states of adjacent vertices, which means that every spin ``tries'' to achieve a state that is distinct from its neighbours. 

The Wang-Swendsen-Koteck\'{y} (WSK) non-local cluster dynamics~\cite{wang1, wang2} is  a popular (Markov chain) Monte Carlo algorithm used to simulate the antiferromagnetic Potts model. In order to be \emph{valid} (in other words, to work properly), Monte Carlo simulations require \emph{ergodicity}, which means that there must be a positive probability of transforming each configuration into any other.  The zero temperature hypothesis is that, at that temperature, any two adjacent vertices must have distinct spins.  Therefore the spin function corresponds to a proper $q$-colouring of $G$, and the ergodicity of the Markov chain holds if and only if the set of $q$-colourings is a Kempe class. Since this does not hold for graphs in general, the statistical mechanics community has studied the ergodicity of the Markov chain on special graphs, especially highly structured graphs that can be embedded on surfaces.  Amongst them, triangular lattices and Kagom\'{e} lattices with boundary conditions have received considerable attention. Both these graph classes can be defined in terms of toroidal grid graphs (graphs that are the Cartesian product of two cycles).  A triangular lattice with boundary conditions (abbreviated to triangular lattice) is formed from a toroidal grid graph by adding a single ``diagonal'' edge to each face of the grid in such a way that all the diagonals are parallel.  Thus the triangular lattice is 6-regular and every face is a triangle. A Kagom{\'e} lattice with boundary conditions (abbreviated to  Kagom{\'e} lattice) is obtained from a toroidal grid graph by first subdividing each edge.  Each face of the grid now contains four midpoints (vertices of degree two) on its, let us say, north, east, south and west sides.  Then the north and east midpoints, and the south and west midpoints, are joined by an edge.  Thus the Kagom\'e lattice is 4-regular, every face is a triangle or a hexagon and every edge belongs to exactly one triangle and one hexagon.  (One can also think of the Kagom\'e lattice as the line graph of a honeycomb embedded on the torus.)  The two lattices are illustrated in Figure~\ref{fig:lattice}.  In Theorem~\ref{thm:lattice}, we summarise what is currently known about the validity of the WSK algorithm on these lattices, including new results implied by Theorem~\ref{thm:main}.  Note that in this work we do not exhibit any bound on the mixing time (informally speaking, the speed of convergence) of the WSK algorithm on these instances. Lower bounds on the recolouring diameter (see, for example,~\cite{BJLPP14}) imply lower bounds on the mixing time on slightly different models. However, upper bounds on the recolouring diameter (see~\cite{BB13,BJLPP14,BP14, CHJ06b}) have no implications for the mixing time, which suggests an interesting direction of research.\newline

We first state a lemma of Las Vergnas and Meyniel~\cite{meyniel3} that we will need to prove the theorem.

\tikzstyle{vertex}=[circle,draw=black, minimum size=4pt, inner sep=0pt]

\newcommand{\row}[0]{
    \foreach \pos in {(0,0), (0.5,0), (1,0), (1.5,0), (2,0), (2.5,0), (3,0), (3.5,0), (4,0), (4.5,0)}
        \node[vertex, fill=black] () at \pos {};
               \path[draw, -, black] (-0.2,0) --  (4.7,0);
} 

\newcommand{\mrow}[0]{
    \foreach \pos in {(0,0), (0.5,0), (1,0), (1.5,0), (2,0), (2.5,0), (3,0), (3.5,0), (4,0), (4.5,0)}
        \node[vertex, fill=white] () at \pos {};
} 

\newcommand{\shortmrow}[0]{
    \foreach \pos in {(0.25,0), (0.75,0), (1.25,0), (1.75,0), (2.25,0), (2.75,0), (3.25,0), (3.75,0), (4.25,0)}
        \node[vertex, fill=white] () at \pos {};
} 

\begin{figure}  
\begin{center}
\begin{tikzpicture}[scale=1.15]


\begin{scope} \row \end{scope}
\begin{scope}[yshift=0.5cm] \row \end{scope}
\begin{scope}[yshift=1cm] \row \end{scope}
\begin{scope}[yshift=1.5cm] \row \end{scope}
\begin{scope}[yshift=2cm] \row \end{scope}
\begin{scope}[yshift=2.5cm] \row \end{scope}
\begin{scope}[yshift=3cm] \row \end{scope}
\begin{scope}[yshift=3.5cm] \row \end{scope}
\begin{scope}[yshift=4cm] \row \end{scope}
\begin{scope}[yshift=4.5cm] \row \end{scope}

\path[draw, -, black] (0,-0.2) --  (0,4.7);
\path[draw, -, black] (0.5,-0.2) --  (0.5,4.7);
\path[draw, -, black] (1,-0.2) --  (1,4.7);
\path[draw, -, black] (1.5,-0.2) --  (1.5,4.7);
\path[draw, -, black] (2,-0.2) --  (2,4.7);
\path[draw, -, black] (2.5,-0.2) --  (2.5,4.7);
\path[draw, -, black] (3,-0.2) --  (3,4.7);
\path[draw, -, black] (3.5,-0.2) --  (3.5,4.7);
\path[draw, -, black] (4,-0.2) --  (4,4.7);
\path[draw, -, black] (4.5,-0.2) --  (4.5,4.7);

\path[draw, -, black] (0.2,-0.2) --  (-0.2,0.2);
\path[draw, -, black] (0.7,-0.2) --  (-0.2,0.7);
\path[draw, -, black] (1.2,-0.2) --  (-0.2,1.2);
\path[draw, -, black] (1.7,-0.2) --  (-0.2,1.7);
\path[draw, -, black] (2.2,-0.2) --  (-0.2,2.2);
\path[draw, -, black] (2.7,-0.2) --  (-0.2,2.7);
\path[draw, -, black] (3.2,-0.2) --  (-0.2,3.2);
\path[draw, -, black] (3.7,-0.2) --  (-0.2,3.7);
\path[draw, -, black] (4.2,-0.2) --  (-0.2,4.2);
\path[draw, -, black] (4.7,-0.2) --  (-0.2,4.7);

\path[draw, -, black] (4.7,0.3) --  (0.3,4.7);
\path[draw, -, black] (4.7,0.8) --  (0.8,4.7);
\path[draw, -, black] (4.7,1.3) --  (1.3,4.7);
\path[draw, -, black] (4.7,1.8) --  (1.8,4.7);
\path[draw, -, black] (4.7,2.3) --  (2.3,4.7);
\path[draw, -, black] (4.7,2.8) --  (2.8,4.7);
\path[draw, -, black] (4.7,3.3) --  (3.3,4.7);
\path[draw, -, black] (4.7,3.8) --  (3.8,4.7);
\path[draw, -, black] (4.7,4.3) --  (4.3,4.7);


\begin{scope}[xshift=6.5cm]

\begin{scope} \row \end{scope}
\begin{scope}[yshift=0.5cm] \row \end{scope}
\begin{scope}[yshift=1cm] \row \end{scope}
\begin{scope}[yshift=1.5cm] \row \end{scope}
\begin{scope}[yshift=2cm] \row \end{scope}
\begin{scope}[yshift=2.5cm] \row \end{scope}
\begin{scope}[yshift=3cm] \row \end{scope}
\begin{scope}[yshift=3.5cm] \row \end{scope}
\begin{scope}[yshift=4cm] \row \end{scope}
\begin{scope}[yshift=4.5cm] \row \end{scope}

\path[draw, -, black] (0,-0.2) --  (0,4.7);
\path[draw, -, black] (0.5,-0.2) --  (0.5,4.7);
\path[draw, -, black] (1,-0.2) --  (1,4.7);
\path[draw, -, black] (1.5,-0.2) --  (1.5,4.7);
\path[draw, -, black] (2,-0.2) --  (2,4.7);
\path[draw, -, black] (2.5,-0.2) --  (2.5,4.7);
\path[draw, -, black] (3,-0.2) --  (3,4.7);
\path[draw, -, black] (3.5,-0.2) --  (3.5,4.7);
\path[draw, -, black] (4,-0.2) --  (4,4.7);
\path[draw, -, black] (4.5,-0.2) --  (4.5,4.7);

\path[draw, -, black] (0.45,-0.2) --  (-0.2,0.45);
\path[draw, -, black] (0.95,-0.2) --  (-0.2,0.95);
\path[draw, -, black] (1.45,-0.2) --  (-0.2,1.45);
\path[draw, -, black] (1.95,-0.2) --  (-0.2,1.95);
\path[draw, -, black] (2.45,-0.2) --  (-0.2,2.45);
\path[draw, -, black] (2.95,-0.2) --  (-0.2,2.95);
\path[draw, -, black] (3.45,-0.2) --  (-0.2,3.45);
\path[draw, -, black] (3.95,-0.2) --  (-0.2,3.95);
\path[draw, -, black] (4.45,-0.2) --  (-0.2,4.45);

\path[draw, -, black] (4.7,0.05) --  (0.05,4.7);
\path[draw, -, black] (4.7,0.55) --  (0.55,4.7);
\path[draw, -, black] (4.7,1.05) --  (1.05,4.7);
\path[draw, -, black] (4.7,1.55) --  (1.55,4.7);
\path[draw, -, black] (4.7,2.05) --  (2.05,4.7);
\path[draw, -, black] (4.7,2.55) --  (2.55,4.7);
\path[draw, -, black] (4.7,3.05) --  (3.05,4.7);
\path[draw, -, black] (4.7,3.55) --  (3.55,4.7);
\path[draw, -, black] (4.7,4.05) --  (4.05,4.7);

\begin{scope}[yshift=0.25cm] \mrow \end{scope}
\begin{scope}[yshift=0.75cm] \mrow \end{scope}
\begin{scope}[yshift=1.25cm] \mrow \end{scope}
\begin{scope}[yshift=1.75cm] \mrow \end{scope}
\begin{scope}[yshift=2.25cm] \mrow \end{scope}
\begin{scope}[yshift=2.75cm] \mrow \end{scope}
\begin{scope}[yshift=3.25cm] \mrow \end{scope}
\begin{scope}[yshift=3.75cm] \mrow \end{scope}
\begin{scope}[yshift=4.25cm] \mrow \end{scope}

\begin{scope} \shortmrow \end{scope}
\begin{scope}[yshift=0.5cm] \shortmrow \end{scope}
\begin{scope}[yshift=1cm] \shortmrow \end{scope}
\begin{scope}[yshift=1.5cm] \shortmrow \end{scope}
\begin{scope}[yshift=2cm] \shortmrow \end{scope}
\begin{scope}[yshift=2.5cm] \shortmrow \end{scope}
\begin{scope}[yshift=3cm] \shortmrow \end{scope}
\begin{scope}[yshift=3.5cm] \shortmrow \end{scope}
\begin{scope}[yshift=4cm] \shortmrow \end{scope}
\begin{scope}[yshift=4.5cm] \shortmrow \end{scope}

\end{scope}

\end{tikzpicture}
\end{center}

\caption{Portions of a triangular lattice and a Kagom\'e lattice}
\label{fig:lattice}
\end{figure}
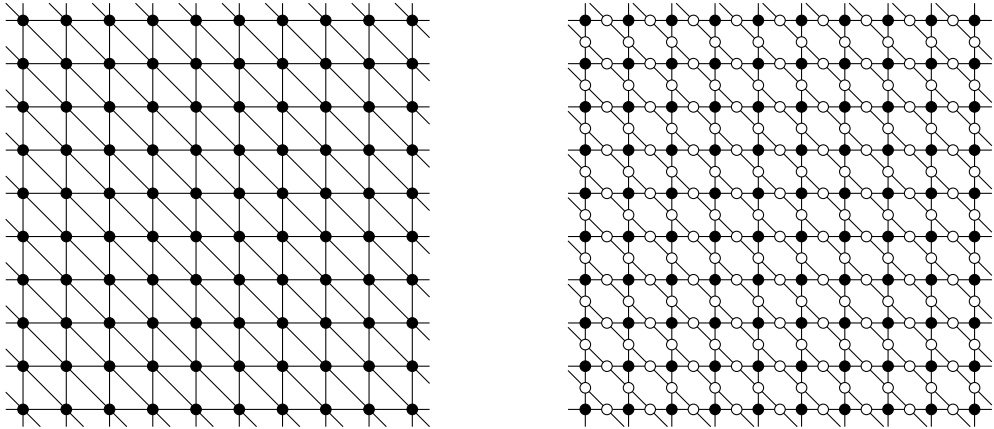
 
\begin{lem}[\cite{meyniel3}]\label{lem:degeneratekempe}
Let $d$ and $k$ be positive integers, $d \geq 0$, $k \geq d+1$. If $G$ is a $d$-degenerate graph, then $C^k_G$ is a Kempe class. 
\end{lem}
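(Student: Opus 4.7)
The plan is to prove the lemma by induction on $n = |V(G)|$. The base case $n = 1$ is trivial, since there is then a unique $k$-colouring. For the inductive step, I exploit the $d$-degeneracy of $G$ to choose a vertex $v$ of degree at most $d$. Then $G - v$ is also $d$-degenerate, so the inductive hypothesis applies to it.

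Given two $k$-colourings $\alpha$ and $\beta$ of $G$, I would proceed in three stages. First, reduce to the case $\alpha(v) = \beta(v)$: if they differ, perform a single Kempe change on the $(\alpha(v),\beta(v))$-component of $G_\alpha(\alpha(v),\beta(v))$ containing $v$; this swap changes $v$'s colour to $\beta(v)$ and leaves the rest of the colouring proper. Second, apply the inductive hypothesis to $G - v$: the restrictions $\alpha|_{G-v}$ and $\beta|_{G-v}$ are $k$-colourings of a $d$-degenerate graph, hence Kempe equivalent in $G - v$ via a sequence $\gamma_0 = \alpha|_{G-v}, \gamma_1, \ldots, \gamma_m = \beta|_{G-v}$ of Kempe changes. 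Third, lift this sequence to a sequence of Kempe changes in $G$ while preserving $v$'s colour throughout.

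The crux is the lifting. A single Kempe change in $G - v$ acts on a component $C$ of $(G-v)_{\gamma_i}(a,b)$ for some pair of colours $a,b$. If the current colour of $v$ lies outside $\{a,b\}$, then $v \notin G_{\gamma_i}(a,b)$ and $C$ remains a connected component of $G_{\gamma_i}(a,b)$, so the same swap is a valid Kempe change in $G$ that achieves the desired effect on $G - v$ without touching $v$. The trouble arises only when $v$'s colour belongs to $\{a,b\}$ and $v$ is adjacent in $G$ to some vertex of $C$, because then $v$ merges with $C$ (and possibly with other components of $(G-v)_{\gamma_i}(a,b)$) into a strictly larger component of $G_{\gamma_i}(a,b)$, and swapping that larger component would alter $v$.

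I expect the main obstacle to be resolving this problematic scenario. My intended fix is, before performing such a swap, to temporarily reroute $v$ to a colour $c \notin \{a,b\}$ by means of an auxiliary Kempe change in $G$ (for instance, by swapping the $(\gamma_i(v),c)$-chain containing $v$), execute the intended swap on $C$, and then restore $v$'s colour if necessary by a further Kempe change. Since $v$ has at most $d \leq k-1$ neighbours at least one colour is available for $v$ at each step, but whether a colour \emph{outside} $\{a,b\}$ is available via a \emph{trivial} single-vertex change can fail precisely when $\deg(v) = d = k-1$ and the neighbours of $v$ realise all remaining colours; in that borderline case one must use a non-trivial Kempe chain, which in turn modifies other vertices of $G - v$ and threatens to desynchronise the planned sequence. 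The hardest part of the proof will be orchestrating these auxiliary swaps so that the overall sequence still transforms $\alpha$ into $\beta$, perhaps by reapplying the inductive hypothesis on $G - v$ after each auxiliary modification to absorb any extra changes introduced.
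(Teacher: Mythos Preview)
The paper does not prove this lemma at all; it is quoted from Las~Vergnas and Meyniel~\cite{meyniel3} and used as a black box. So there is no ``paper's own proof'' to compare against, and I will instead assess your plan on its merits.

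Your inductive framework is the standard one and is correct. The difficulty you flag is genuine, but your proposed fix (a non-trivial auxiliary Kempe chain on $v$, then reapplying the inductive hypothesis to absorb the resulting perturbation of $G-v$) is both unnecessary and dangerous: there is no termination argument for that loop. The clean resolution uses two observations you are missing.

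First, do not try to keep the colour of $v$ fixed. It suffices, for each $i$, to reach \emph{some} extension of $\gamma_i$ to $G$; at the very end you reconcile with $\beta$ at $v$. This works because any two extensions of the same colouring of $G-v$ are Kempe equivalent by a \emph{single trivial} change on $v$: if the two extensions give $v$ colours $c_1$ and $c_2$, then no neighbour of $v$ is coloured $c_1$ or $c_2$, so $\{v\}$ is a $(c_1,c_2)$-chain.

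Second, the ``borderline'' case is harmless. Suppose at step $i$ the current colour of $v$ is $a$ and you must realise an $(a,b)$-swap on a component $C$ of $G-v$. If some colour $c\notin\{a,b\}$ is absent from $N(v)$, recolour $v$ to $c$ (trivial, by the previous paragraph) and then swap $C$. Otherwise every colour in $\{1,\dots,k\}\setminus\{a,b\}$ appears on $N(v)$; since $a$ does not and $|N(v)|\le k-1$, either $b$ is also absent from $N(v)$ (then $\{v\}$ is an $(a,b)$-chain and $C$ is still a chain of $G$, so swap $C$), or the $k-1$ neighbours carry the $k-1$ colours of $\{1,\dots,k\}\setminus\{a\}$ bijectively. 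In that last case $v$ has \emph{exactly one} neighbour coloured $b$, hence $v$ is attached to exactly one $(a,b)$-component $C'$ of $G-v$. If $C'=C$, swap $C\cup\{v\}$ in $G$; if $C'\ne C$, then $C$ is still a component of $G$, swap it. Either way the restriction to $G-v$ is precisely $\gamma_{i+1}$, so you have produced an extension of $\gamma_{i+1}$ with one Kempe change and no desynchronisation.
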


\begin{theorem} \label{thm:lattice}
The WSK algorithm for $q$-colourings of the triangular lattice  is valid if $q \geq 6$ and is not valid if $q \leq 4$. 
The WSK algorithm for $q$-colourings of the Kagom\'{e} lattice is valid if $q \geq 4$ and is not valid if $q \leq 3$.
\end{theorem}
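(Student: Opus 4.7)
The plan is to split Theorem~\ref{thm:lattice} into its positive and negative halves. The positive assertions --- validity for $q \geq 6$ on the triangular lattice and $q \geq 4$ on the Kagom\'e lattice --- follow at once by combining Theorem~\ref{thm:main} with Lemma~\ref{lem:degeneratekempe}. The triangular lattice is connected, non-complete, and $6$-regular, so Theorem~\ref{thm:main} at $k = 6$ gives directly that $C_G^6$ is a Kempe class. For $q \geq 7$ I would note that any $k$-regular graph is trivially $k$-degenerate (each vertex has degree at most $k$ in every induced subgraph) and invoke Lemma~\ref{lem:degeneratekempe} with $d = 6$. The same two-step template works on the Kagom\'e lattice, which is $4$-regular: Theorem~\ref{thm:main} at $k = 4$ covers $q = 4$, and Lemma~\ref{lem:degeneratekempe} with $d = 4$ covers $q \geq 5$.

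For the negative assertions, the cases $q \leq 2$ are vacuous, since both lattices contain triangles and therefore admit no proper $q$-colouring to begin with. The remaining cases --- the triangular lattice at $q \in \{3, 4\}$ and the Kagom\'e lattice at $q = 3$ --- are already established in the statistical-mechanics literature~\cite{mohar3, mohar4}, and my plan is to cite them. In spirit these results extend the triangular-prism argument recalled at the end of Section~1: one exhibits a proper colouring whose colour partition is invariant under every Kempe change, which automatically separates it from any colouring inducing a different partition. For the triangular lattice at $q = 3$, for instance, the density of triangles forces every proper $3$-colouring to be ``striped'' into three parallel families of monochromatic lines, so each $(a,b)$-Kempe chain spans two full families and every Kempe change merely permutes the three colour classes; choosing toroidal dimensions that admit two colour partitions with different class-size multisets produces the required pair of non-equivalent colourings. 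The $q = 4$ triangular case and the $q = 3$ Kagom\'e case are structurally similar but more delicate.

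The main obstacle on the positive side is essentially nonexistent once Theorem~\ref{thm:main} is in hand: the only remaining content is the trivial observation that $k$-regular graphs are $k$-degenerate, which verifies the hypothesis of Lemma~\ref{lem:degeneratekempe}. The genuine difficulty lies on the negative side and is carried out in~\cite{mohar3, mohar4}. The contribution of this theorem is therefore largely organisational: Theorem~\ref{thm:main} closes precisely the two remaining positive gaps ($q = 6$ on the triangular lattice and $q = 4$ on the Kagom\'e lattice), and these new positive results butt up exactly against the known negative results, thereby completing the ergodicity picture for both families of lattices.
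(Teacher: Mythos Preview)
Your proposal is correct and matches the paper's own proof essentially line for line: Theorem~\ref{thm:main} handles the critical cases $q=6$ (triangular) and $q=4$ (Kagom\'e), Lemma~\ref{lem:degeneratekempe} handles larger $q$ via $k$-degeneracy, and the negative results are cited from~\cite{mohar3,mohar4}. The additional intuition you provide for the negative cases is not in the paper's proof but is harmless commentary.
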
 
 
 \begin{proof}
For the triangular lattice,  Mohar and Salas showed in~\cite{mohar3} that the chain is not ergodic when $q \leq 4$. Theorem~\ref{thm:main} ensures that when $q = 6$, as the triangular lattice is $6$-regular, the WSK Markov chain is ergodic, and then the WSK algorithm is valid. For larger values of $q$, Lemma~\ref{lem:degeneratekempe} ensures that the chain is also ergodic.

For the Kagom\'{e} lattice,  Mohar and Salas proved in~\cite{mohar4} that the chain is not ergodic when $q \leq 3$. Theorem~\ref{thm:main} ensures that when $q = 4$, as the Kagom\'{e} lattice is $4$-regular, the WSK` Markov chain is ergodic and then the WSK algorithm is valid. And, again, for larger values of $q$, we use Lemma~\ref{lem:degeneratekempe}.
 \end{proof}
 
We observe that this leaves the single open case of a triangular lattice with $q = 5$. 

We note that in this section, we have discussed the validity of the WSK algorithm on lattices with boundary conditions. The validity of the WSK algorithm on lattices with free conditions (that is, lattices embedded in the plane) has already been studied by Mohar and Salas~\cite{mohar3, mohar4}.

\section{Preliminaries}\label{section:pre}

Let $S$ be a subset of the vertex set of a graph $G$.  Then $G[S]$ denotes the subgraph of $G$ induced by $S$.  Let $d$ be a positive integer. Then a \emph{$d$-elimination ordering} of the vertices of $G$ is an ordering such that each vertex is adjacent to at most $d$ vertices later in the ordering.   We say that the ordering \emph{ends in $S$} if the vertices of $S$ are later in the ordering than all other vertices.  A graph is \emph{$d$-degenerate} if there is a $d$-elimination ordering of its vertices, or, equivalently, if every induced subgraph has a vertex of degree at most $d$.  From these definitions we immediately obtain:

\begin{lem}\label{rem:EliminationOrdering}
Let $d$ be a positive integer.  Let $G$ be a graph, and let~$S$ be a subset of the vertices of $G$. If $G$ admits a $d$-elimination ordering that ends in~$S$, then any $(d+1)$-colouring of $G[S]$ can be extended to a $(d+1)$-colouring of $G$.
\end{lem}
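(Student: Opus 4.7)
The plan is to extend the $(d+1)$-colouring $\alpha$ of $G[S]$ to all of $G$ by greedily colouring the vertices outside $S$ in the reverse of the given $d$-elimination ordering. This is the standard greedy argument used to show $d$-degenerate graphs are $(d+1)$-colourable, adapted so that the partial colouring on $S$ serves as the base case.

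More precisely, let $v_1, v_2, \ldots, v_n$ be the $d$-elimination ordering of $V(G)$ ending in $S$, so that $S = \{v_{m+1}, \ldots, v_n\}$ for some $m \in \{0, 1, \ldots, n\}$. I define a colouring $\beta$ of $G$ by setting $\beta(v) = \alpha(v)$ for $v \in S$ and then, proceeding through $i = m, m-1, \ldots, 1$ in this order, assigning $\beta(v_i)$ to be any colour in $\{1, \ldots, d+1\}$ not used on the already-coloured neighbours of $v_i$.

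The key observation is that when $v_i$ is processed, its already-coloured neighbours are exactly the neighbours $v_j$ with $j > i$ (those with $j \leq i$ have not yet been coloured, while those in $S$ were coloured at the outset, which fits because $S$ lies entirely at the end of the ordering). By the $d$-elimination property, there are at most $d$ such neighbours, so they use at most $d$ colours, leaving at least one colour in the palette $\{1, \ldots, d+1\}$ available for $v_i$. Hence $\beta(v_i)$ can be chosen to differ from each of $\beta(v_j)$ for $v_j$ an already-coloured neighbour of $v_i$, which guarantees that $\beta$ remains a proper colouring at every step.

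There is no genuine obstacle here: the argument is purely a backward greedy extension, and the statement is essentially a restatement of the standard fact that $d$-degenerate graphs are $(d+1)$-colourable with the additional feature that any prescribed proper $(d+1)$-colouring on the ``last'' vertices can be kept fixed. The only point worth stating carefully is that \emph{all} the neighbours that constrain $v_i$'s colour are later in the ordering, which is precisely what the $d$-elimination hypothesis delivers.
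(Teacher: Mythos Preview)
Your argument is correct; it is exactly the standard backward-greedy extension the paper has in mind when it says the lemma follows ``immediately'' from the definitions (the paper gives no explicit proof). There is nothing to add.
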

 
Let us refine this in a way that will prove useful. 
 
\begin{lem}\label{lem:elimination}
Let $k$ be a positive integer.  Let $G=(V,E)$ be a graph, and let~$S \subseteq V$, $|S| \leq k$, be a subset of the vertex set.  Suppose that $G[V \setminus S]$ is connected, that the vertices of $V \setminus S$ each have degree at most $k$ in $G$, and there is a vertex $x \in V \setminus S$ of degree at most $k-1$ in $G$.   Then any $k$-colouring of $G[S]$ can be extended to a $k$-colouring of~$G$.
\end{lem}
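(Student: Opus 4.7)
The plan is to build a $(k-1)$-elimination ordering of $V$ that ends in $S$, and then to invoke Lemma~\ref{rem:EliminationOrdering} with $d = k-1$, which will immediately extend any $k$-colouring of $G[S]$ to a $k$-colouring of $G$.

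To construct the ordering, I would first list the vertices of $V \setminus S$ as $v_1, v_2, \ldots, v_r$ by performing a breadth-first (or depth-first) traversal of $G[V \setminus S]$ starting at $v_1 := x$. Since $G[V \setminus S]$ is connected by hypothesis, such a traversal produces an ordering in which every $v_i$ with $i \geq 2$ has at least one neighbour among $\{v_1, \ldots, v_{i-1}\}$. I would then append the vertices of $S$ in an arbitrary order $s_1, \ldots, s_m$ (where $m = |S| \leq k$). This yields an ordering of $V$ in which the vertices of $S$ indeed come last.

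The verification that this is a $(k-1)$-elimination ordering splits into three cases. For $v_1 = x$, all of its neighbours in $G$ come later, and there are at most $k - 1$ of them by hypothesis. For $v_i$ with $i \geq 2$, at least one neighbour (some $v_j$ with $j < i$, guaranteed by the traversal) is earlier in the ordering, so the number of later neighbours is at most $\deg_G(v_i) - 1 \leq k - 1$, using the assumption that $v_i \in V \setminus S$ has degree at most $k$ in $G$. For $s_i \in S$, the later vertices are among $s_{i+1}, \ldots, s_m$, so the number of later neighbours is at most $m - 1 \leq k - 1$ since $|S| \leq k$.

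There is essentially no obstacle here: the connectedness of $G[V \setminus S]$ is what allows each $v_i$ (with $i \geq 2$) to ``spend'' one of its degree units on an earlier neighbour, while the degree bound on $x$ provides the start of the traversal with the same slack already built in. The hypothesis $|S| \leq k$ is used only to handle the tail of the ordering. Once the ordering is constructed, Lemma~\ref{rem:EliminationOrdering} (applied with $d = k - 1$) delivers the conclusion directly.
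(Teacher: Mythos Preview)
Your proposal is correct and essentially identical to the paper's own proof: the paper also orders $V\setminus S$ by a breadth-first search from $x$, appends $S$, verifies the same three cases to show this is a $(k-1)$-elimination ordering, and then applies Lemma~\ref{rem:EliminationOrdering} with $d=k-1$.
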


\begin{proof}
Let the vertices of $V \setminus S$ be ordered according to the order in which they are found by a breadth-first search from $x$ (and certainly the search finds all these vertices as $G[V \setminus S]$ is connected).  Append the vertices of $S$ to this ordering.  This is certainly a $(k-1)$-elimination ordering of $G$ since~$x$ has at most $k-1$ neighbours in total, every other vertex in $V \setminus S$ has at most~$k$ neighbours, but at least one --- the vertex from which it was discovered during the breadth-first search --- is earlier in the ordering, and each vertex of $S$ is followed in the ordering only by other vertices of $S$ of which there are at most $k-1$.  So, by Lemma~\ref{rem:EliminationOrdering} with $d=k-1$, the $k$-colouring of $S$ can be extended to~$G$.
\end{proof} 
 
We need some known results. 

\begin{lem}[\cite{meyniel3}]\label{lem:cliqueintersection}
Let $k$ be a positive integer. Let $G_1, G_2$ be two graphs such that $G_1 \cap G_2$ is complete. If both $C^k_{G_1}$ and $C^k_{G_2}$ are Kempe classes, then~$C^k_{G_1 \cup G_2}$ is a Kempe class.
\end{lem}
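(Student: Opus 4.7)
My plan is to prove the lemma by a two-step reduction that exploits the fact that $K := G_1 \cap G_2$ is complete. Let $\alpha, \beta$ be two $k$-colourings of $G = G_1 \cup G_2$; the goal is to show $\alpha \sim_G \beta$.

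The foundational observation I would record first is the following: if $C$ is an $(a,b)$-component of $G_1$ with $C \cap K = \emptyset$, then $C$ is also an $(a,b)$-component of $G$. Indeed, since $G = G_1 \cup G_2$, the sets $G_1 \setminus K$ and $G_2 \setminus K$ share no edge in $G$; because no vertex of $C$ lies in $K$, the component $C$ has no neighbour outside $G_1$, so the $(a,b)$-component of $G$ that contains $C$ cannot reach beyond $G_1$. In particular, Kempe changes in $G_1$ whose components avoid $K$ lift directly to Kempe changes in $G$, and symmetrically for $G_2$.

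The first reduction aligns the colourings on $K$. Applying the hypothesis that $C^k_{G_1}$ is a Kempe class, I would pick a $k$-colouring $\eta$ of $G_1$ with $\eta|_K = \beta|_K$ and a Kempe sequence in $G_1$ from $\alpha|_{G_1}$ to $\eta$. I would then lift this sequence to $G$ step by step: a Kempe change in $G_1$ on an $(a,b)$-component $C$ is realised in $G$ by swapping the (possibly larger) $(a,b)$-component $\widehat{C} \supseteq C$ of $G$, which only extends into $G_2 \setminus K$ when $C$ meets $K$. The result is a $G$-coloring $\alpha'$ with $\alpha'|_K = \beta|_K$.

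Once $\alpha'|_K = \beta|_K$, the restrictions $\alpha'|_{G_1}$ and $\beta|_{G_1}$ agree on $K$ and are Kempe equivalent in $G_1$ by hypothesis. The second reduction argues that this equivalence can in fact be realised using only Kempe changes whose $(a,b)$-components are disjoint from $K$; by the foundational observation, these lift to $G$ and transform $\alpha'$ into a coloring that agrees with $\beta$ on $G_1$. A symmetric argument within $G_2$ then produces $\beta$ itself.

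The main obstacle I expect is the refinement in the second reduction: namely, showing that two $k$-colourings of $G_1$ agreeing on $K$ can be Kempe-connected in $G_1$ using only moves whose components avoid $K$. The completeness of $K$ is essential here, since at most two vertices of $K$ are ever coloured within a chosen pair $\{a,b\}$, so the interaction between Kempe components and $K$ is very constrained and supports an inductive rerouting of any given Kempe sequence around $K$. An analogous technicality also arises in the first reduction, where $\widehat{C} \cap G_1$ may properly contain $C$; this has to be handled so that the lifted moves on the $G$-side still carry the $G_1$-restriction through the intended sequence of colourings.
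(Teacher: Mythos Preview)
The paper does not give its own proof of this lemma; it is quoted from Las~Vergnas and Meyniel~\cite{meyniel3} and used as a black box. So there is no ``paper's proof'' to compare against directly; I will instead comment on whether your outline matches the standard argument and where it is incomplete.

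Your two-step plan (lift a $G_1$-sequence to align the colourings on $K$, then realise the remaining $G_2$-equivalence by Kempe moves whose chains avoid $K$) is exactly the classical route. One of your two stated obstacles, however, is not genuine: in the first reduction you worry that $\widehat{C}\cap G_1$ may properly contain $C$. It cannot. Because $K$ is a clique, any $(a,b)$-chain of $G$ meets $K$ in at most two vertices, and if there are two they are adjacent; hence any excursion of $\widehat{C}$ into $G_2\setminus K$ enters and leaves $K$ through vertices that are already joined by an edge of $G_1$. It follows that $\widehat{C}\cap G_1$ is connected in $G_1$ and, being contained in an $(a,b)$-chain of $G_1$ and containing $C$, equals $C$. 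So the lift of a $G_1$-Kempe sequence tracks the $G_1$-restriction exactly, and after the first reduction you may in fact assume $\alpha|_{G_1}=\beta|_{G_1}$, not merely agreement on $K$.

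Your second obstacle is the real one, and your proposal does not yet resolve it. The standard fix is as follows. Take any $G_2$-Kempe sequence from $\alpha|_{G_2}$ to $\beta|_{G_2}$. Each time a step swaps a chain $C_i$ meeting $K$, replace that single swap by swapping \emph{all other} $(a_i,b_i)$-chains instead; since $K$ contributes at most one $(a_i,b_i)$-chain and it is $C_i$, these complementary chains all avoid $K$. The effect is the same as the original step followed by the global transposition of $a_i$ and $b_i$. Carrying the accumulated colour permutation $\sigma$ through the sequence, you reach $\sigma(\beta|_{G_2})$ by $K$-avoiding moves; and since the $K$-restriction was fixed throughout, $\sigma$ fixes every colour used on $K$, hence is a product of transpositions of colours not appearing on $K$, each of which is itself realised by $K$-avoiding swaps. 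This yields $\beta|_{G_2}$ via $K$-avoiding moves, which lift to $G$ without disturbing $G_1$, completing the proof. Your sketch correctly anticipates that completeness of $K$ is what makes this work, but the ``inductive rerouting'' you allude to needs to be made concrete along these lines.
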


We \emph{identify} two non-adjacent vertices $u$ and $v$ in a graph~$G$ if we replace them by a new vertex adjacent to all neighbours of either $u$ or $v$ (or both) in~$G$.  The graph obtained is denoted $G_{u+v}$.  In the proof of Theorem~\ref{thm:main}, we will often think about $G_{u+v}$ when reasoning about the colourings of $G$.  Let~$C^k_G(u,v)$ denote the colourings of~$G$ for which $u$ and $v$ are coloured alike.  We note that there is an obvious bijection between~$C^k_G(u,v)$ and $C^k_{G_{u+v}}$.

\begin{lem}[\cite{FJP15}]\label{lem:matching}
Let $k \geq 3$ be a positive integer. Let $G$ be a $3$-connected graph. Let $u$ and $v$ be non-adjacent vertices of $G$ with a common neighbour.  Then, if $C^k_G(u,v)$ is not empty, it is a Kempe class.
\end{lem}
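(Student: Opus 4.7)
The natural strategy is to work with the identified graph $G' := G_{u+v}$, in which $u$ and $v$ are merged into a single vertex $x$, so that there is an obvious bijection $\Phi: C^k_G(u,v) \to C^k_{G'}$ that sends $\alpha$ to the colouring of $G'$ agreeing with $\alpha$ off $\{u,v\}$ and giving $x$ the common colour $\alpha(u)$. Note that $G'$ is simple because $u$ and $v$ are non-adjacent, and the existence of the common neighbour $w$ means $x$ has strictly fewer than $\deg(u) + \deg(v)$ neighbours in $G'$, which will matter later.

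The first step is to show that any single Kempe change in $G'$ can be simulated by a short sequence of Kempe changes in $G$ that start and end inside $C^k_G(u,v)$. Given $\alpha \in C^k_G(u,v)$ with $\alpha(u) = \alpha(v) = a$, and a Kempe change in $G'$ on an $(a,b)$-component $C$ of $\Phi(\alpha)$: if $x \notin C$, then the same vertex set is an $(a,b)$-component of $\alpha$ in $G$, because $u$ and $v$ have colour $a$ but lie outside $C$, so any edge from $C$ to $\{u,v\}$ in $G$ would have placed $u$ or $v$ in the component in $G'$. The lift is then a single Kempe change. If $x \in C$, then in $G$ the $(a,b)$-components of $u$ and $v$ are either a single common component $C_u = C_v$ corresponding to $C\setminus\{x\} \cup \{u,v\}$, or two disjoint components; in the first case one Kempe change suffices, and in the second one performs the swap on $C_u$ and then on $C_v$, noting that the two components share no edge so the second swap is unaffected by the first, and the result again lies in $C^k_G(u,v)$.

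The remaining and harder task is to verify that $C^k_{G'}$ is itself a Kempe class. The plan is to exploit the $3$-connectivity of $G$ together with the common neighbour $w$. The key observation is that $G'$ is $2$-connected: any cut of $G'$ of size at most one would pull back to a cut of $G$ of size at most two (replacing $x$ by $\{u,v\}$ if it appears in the cut), contradicting $3$-connectivity. Moreover $w$ is still adjacent to $x$ in $G'$, which lets one control structural decompositions near $x$. One would then either decompose $G'$ along a small clique separator and apply Lemma~\ref{lem:cliqueintersection}, or proceed by induction on $|V(G)|$: if $G'$ has a vertex $y \ne x$ of small degree, one can order the vertices so that Lemma~\ref{lem:elimination} extends a colouring from a suitable subgraph to all of $G'$, giving enough flexibility to route between arbitrary colourings.

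The main obstacle is precisely this last step: the identification $u+v$ can destroy regularity and can increase local degrees, so a clean induction requires a hypothesis that still applies after identification. Choosing the right invariant (likely a connectivity condition rather than regularity, possibly combined with bounds on the degree of $x$) and handling the case where $x$ is itself the unique low-degree vertex are where the genuine technical difficulty resides; everything else in the proof is the lifting machinery described above.
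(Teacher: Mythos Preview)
Your lifting machinery---simulating a single Kempe change in $G' = G_{u+v}$ by one or two Kempe changes in $G$---is correct and is indeed half of the argument. The gap is entirely in your ``remaining and harder task'': you never actually establish that $C^k_{G'}$ is a Kempe class, and the route you sketch (2-connectivity, clique separators, induction on $|V(G)|$, invoking Lemma~\ref{lem:elimination}) does not lead to a proof. In particular, Lemma~\ref{lem:elimination} is about \emph{extending} a partial colouring to a full one, not about connecting two given colourings by Kempe changes, so it is the wrong tool here; and there is no evident inductive hypothesis that survives the identification.

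The argument the paper imports from \cite{FJP15} is far more direct, and the key insight is the one you gesture at (``$x$ has strictly fewer than $\deg(u)+\deg(v)$ neighbours'') without exploiting. One shows that $G_{u+v}$ is $(k-1)$-degenerate (this is stated separately as Lemma~\ref{lem:degenerate}; note it uses that $G$ has maximum degree $k$, which holds in every application since $G$ is $k$-regular). The proof is a single breadth-first search: $G-\{u,v\}$ is connected because $G$ is 3-connected; the common neighbour $w$ has degree at most $k-2$ in $G-\{u,v\}$; order the vertices of $G-\{u,v\}$ by BFS from $w$ and append the merged vertex $x$ last. Every vertex then has at most $k-1$ later neighbours, so this is a $(k-1)$-elimination ordering of $G_{u+v}$. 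Now Lemma~\ref{lem:degeneratekempe} (Las Vergnas--Meyniel) immediately gives that $C^k_{G_{u+v}}$ is a Kempe class, and your lifting step finishes the proof.

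So the missing idea is: do not try to decompose or induct on $G'$. Use the common neighbour $w$---which drops to degree at most $k-2$ once $u$ and $v$ are removed---as the seed of a $(k-1)$-elimination ordering of $G_{u+v}$, and then apply Lemma~\ref{lem:degeneratekempe} directly.
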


\noindent We observe that Lemma~\ref{lem:matching} is \cite[Lemma 9]{FJP15} (in fact, that states, equivalently, that any two colourings of~$C^k_G(u,v)$ are Kempe equivalent). The proof in~\cite{FJP15} first establishes the following statement which it is useful to state explicitly.

\begin{lem}[\cite{FJP15}]\label{lem:degenerate}
Let $k\geq 3$ be a positive integer.  Let $G$ be a 3-connected graph of maximum degree~$k$.  Let $u$ and $v$ be non-adjacent vertices of $G$ with a common neighbour. Then $G_{u+v}$  is $(k-1)$-degenerate. 
\end{lem}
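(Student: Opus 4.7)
The plan is to verify $(k-1)$-degeneracy directly, by showing that every non-empty induced subgraph $H$ of $G_{u+v}$ contains a vertex of degree at most $k-1$ in $H$. The key degree observations in $G_{u+v}$ are: (i) any common neighbour $c$ of $u$ and $v$ in $G$ satisfies $\deg_{G_{u+v}}(c) = \deg_G(c) - 1 \leq k - 1$, since the two edges $cu$ and $cv$ merge into the single edge $cw$; (ii) every other non-identified vertex $x$ satisfies $\deg_{G_{u+v}}(x) = \deg_G(x) \leq k$; and (iii) the identified vertex $w$ may have degree up to $2k-1$, so it is the only genuinely problematic vertex.

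Given such an $H$, if $H$ contains some common neighbour of $u$ and $v$, observation (i) immediately furnishes the required low-degree vertex. Otherwise I argue by contradiction, assuming every vertex of $H$ has degree at least $k$ in $H$. For each $x \in H \setminus \{w\}$, observation (ii) then forces $\deg_G(x) = k$ and every $G_{u+v}$-neighbour of $x$ to lie in $H$; unwinding the identification, this means every $G$-neighbour of $x$ outside $\{u,v\}$ lies in $S := H \setminus \{w\}$. Hence $S$ has no $G$-edges to $V(G) \setminus (S \cup \{u,v\})$, so $S$ is a union of connected components of $G \setminus \{u,v\}$.

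The finishing step invokes 3-connectivity: since $G$ admits a non-adjacent pair $u,v$, we have $|V(G)| \geq 5$, and $G \setminus \{u,v\}$ is connected. Thus either $S = \emptyset$, giving $H = \{w\}$ with $\deg_H(w) = 0 < k$, or $S = V(G) \setminus \{u,v\}$, in which case any common neighbour of $u$ and $v$ lies in $S \subseteq H$, contradicting the case hypothesis. The main subtlety is that $w$ itself can have degree as large as $2k-1$ in $G_{u+v}$; combining the common-neighbour hypothesis with 3-connectivity is exactly what forces every potentially bad induced subgraph to contain some common neighbour, whose degree is then automatically small.
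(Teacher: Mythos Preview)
Your argument is correct. The degree observations (i)--(iii) are accurate, and the case split is clean: in any induced subgraph $H$ of $G_{u+v}$ with minimum degree at least $k$, every $x\in S:=H\setminus\{w\}$ must have all its $G_{u+v}$-neighbours inside $H$, hence all its $G$-neighbours outside $\{u,v\}$ inside $S$; so $S$ is a union of components of $G-\{u,v\}$, which by 3-connectivity is connected, forcing $S=\emptyset$ or $S=V(G)\setminus\{u,v\}$, and both alternatives are ruled out exactly as you say.

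As for comparison: the paper does not give its own proof of this lemma. It records the statement as something established within the proof of \cite[Lemma~9]{FJP15} and simply cites that source. Your direct verification via the definition of degeneracy is the natural self-contained argument, and it matches in spirit what one would expect the cited proof to do (use the common neighbour to produce a low-degree vertex, and use 3-connectivity to propagate this to all induced subgraphs).
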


A list assignment  of a graph $G=(V,E)$ is a function $L$ with domain $V$ such that, for each vertex $u \in V$, $L(u)$ is a set of colours. We say that $G$ is \emph{$L$-colourable} if there is a colouring of $G$ where every vertex~$u$ is coloured with a colour of $L(u)$, and $G$ is \emph{degree-choosable} if it is $L$-colourable for any list assignment $L$ where, for each vertex $u$ in $G$, the length of the list~$L(u)$ is equal to the degree of $u$.  The \emph{blocks} of a graph are its maximal 2-connected subgraphs.  The following well-known fact is a special case of the characterization of degree-choosable connected graphs of Borodin~\cite{borodin} and Erd\H{o}s~et~al.~\cite{erdos1979choosability}.

\begin{lem}[\cite{borodin,erdos1979choosability}\label{lem:gallai}]
Let $G$ be a connected graph.  Then $G$ is degree-choosable unless each block of $G$ is a complete graph or an odd cycle.
\end{lem}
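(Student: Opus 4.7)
The plan is to prove the two directions of the equivalence separately. For the necessity direction, I would show that if every block of $G$ is either a complete graph or an odd cycle (call such a $G$ a Gallai tree), then $G$ is not degree-choosable by constructing an explicit bad list assignment. The atomic cases are clear: for $K_n$, the constant assignment $L(v) = \{1,\dots,n-1\}$ has no proper colouring by pigeonhole; for an odd cycle $C_{2k+1}$, the constant assignment $L(v) = \{1,2\}$ also fails. For a general Gallai tree, I would induct on the number of blocks. Peel off a leaf block $B$ with cut vertex $v$, apply induction to the remaining graph $G - (V(B)\setminus\{v\})$ to obtain a bad degree list assignment forcing $v$ to receive some specific colour $c$, and then extend this assignment onto $V(B)\setminus\{v\}$ so that no proper extension starting from colour $c$ at $v$ exists (by choosing lists as in one of the two atomic constructions, translated so that $c$ blocks the extension).

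For the sufficiency direction, assume $G$ is connected with a block $B_0$ that is neither complete nor an odd cycle, and let $L$ be an arbitrary degree list assignment. I would first dispose of the case $|L(w)| > \deg_G(w)$ for some vertex $w$: a BFS ordering rooted at $w$ allows a greedy colouring processed in reverse, since each non-root vertex has at least one later neighbour (its parent) and thus at most $\deg(v)-1$ colours already forbidden. Hence I may assume $|L(w)| = \deg_G(w)$ everywhere.

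Next I would reduce to the case $G = B_0$ (i.e., $G$ is 2-connected). If the block tree of $G$ has multiple blocks, pick a leaf block $B' \neq B_0$ with cut vertex $v$. A non-cut vertex $u$ of $B'$ satisfies $|L(u)| = \deg_G(u) = \deg_{B'}(u)$, while $|L(v)| = \deg_G(v) > \deg_{B'}(v)$, giving slack at $v$ that will absorb its external precolour. Apply induction to the smaller graph $G - (V(B')\setminus\{v\})$ (which still contains $B_0$) to obtain a colouring there; then the precoloured $v$ must be extended through $B'$, which can be done by Lemma~\ref{rem:EliminationOrdering} applied to an elimination ordering of $B'$ ending at $v$.

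Finally, for the 2-connected case where $G=B_0$ is not a clique or odd cycle, I would invoke the structural fact that such a graph must contain one of a few good configurations, e.g.\ an even cycle with a chord, a theta subgraph, or two cycles joined by a path. Using such a good subgraph, one locates two carefully chosen vertices $u,v$ whose lists can be exploited: if $L(u) \cap L(v) \neq \emptyset$, assign them a common colour and greedily extend through $G - \{u,v\}$ using the slack thereby created at their common neighbourhood; otherwise choose distinct colours at $u$ and $v$ that free a colour elsewhere and propagate. The main obstacle is this 2-connected step: verifying that the structural case analysis covers every non-clique non-odd-cycle 2-connected graph, and checking that the chosen pair always leaves enough flexibility in the remaining lists to finish the colouring, is the genuinely non-trivial content of the lemma.
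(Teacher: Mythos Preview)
The paper does not prove this lemma at all: it is stated with citations to Borodin and to Erd\H{o}s, Rubin and Taylor as a ``well-known fact'' and used as a black box. So there is no proof in the paper to compare your proposal against.

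That said, your sketch is the standard route to this classical result, and the necessity direction and the reduction to the 2-connected case are fine. Note, however, that the lemma as stated in the paper only asserts one direction (if some block is neither complete nor an odd cycle then $G$ is degree-choosable), so your necessity argument, while correct, is not required here. For the 2-connected step you correctly identify the real work, but your description is too vague to count as a proof: the clean structural fact actually used (sometimes called Rubin's block lemma) is that a 2-connected graph which is neither a clique nor an odd cycle contains an induced even cycle, or equivalently contains two non-adjacent vertices $x,y$ with a common neighbour $z$ such that $G-\{x,y\}$ is connected. One then colours $x$ and $y$ with a common colour from $L(x)\cap L(y)$ if that intersection is nonempty (the slack at $z$ lets a greedy ordering ending at $z$ finish), and otherwise exploits the disjointness of $L(x)$ and $L(y)$ to create slack elsewhere. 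Your ``even cycle with a chord / theta subgraph'' phrasing gestures at this but does not pin down the configuration precisely enough to make the greedy extension go through; if you want a self-contained proof you should state and prove Rubin's lemma explicitly.
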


We require two more definitions.  Given two sets $S_1$ and $S_2$ of vertices of~$G$, we say that~$S_1$ \emph{dominates} $S_2$ if every vertex in $S_2$ is adjacent to at least one vertex in~$S_1$. Additionally, $S_1$ \emph{weakly dominates} $S_2$ if every vertex in $S_2$ is adjacent to exactly one vertex in $S_1$.  

\section{Proof of Theorem~\ref{thm:main}}\label{section:final}

We must show that, for $k \geq 4$, if $G$ is a connected non-complete $k$-regular graph, then the set of $k$-colourings of $G$ is a Kempe class.  In Propositions~\ref{prop:3connected},~\ref{cor:diam3} and~\ref{lem:complementbipartite}, we show that this claim holds, respectively, whenever~$G$ is not 3-connected, 3-connected with diameter at least 3 and with diameter exactly 2.  It is clear that taken together the propositions imply Theorem~\ref{thm:main}.

\subsection{Graphs that are not 3-connected}

We first prove that Theorem~\ref{thm:main} holds when $G$ is not 3-connected.

\begin{prop}\label{prop:3connected}
Let $k \geq 4$ be a positive integer.  Let $G$ be a connected $k$-regular graph that is not 3-connected. Then $C^k_G$ is a Kempe class.
\end{prop}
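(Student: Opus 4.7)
\emph{Proof plan.} Let $S$ be a minimum vertex cut of $G$, so $|S|\in\{1,2\}$. Let $C_1,\ldots,C_m$ be the components of $G-S$ and, for each $i$, set $G_i = G[V(C_i)\cup S]$. My overall strategy is to apply Lemma~\ref{lem:cliqueintersection} repeatedly to glue together the Kempe classes of the smaller subgraphs $G_i$. The reason this has a chance of working is that each $s\in S$ is forced to have a neighbour in every $C_j$ (otherwise $S$ would not be minimum), so $s$ has at most $k-(m-1)\leq k-1$ neighbours in any single $G_i$. A breadth-first search of $G_i$ starting from such an $s$ then produces a $(k-1)$-elimination ordering, so $G_i$ is $(k-1)$-degenerate, and Lemma~\ref{lem:degeneratekempe} yields that $C^k_{G_i}$ is a Kempe class.

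If $|S|=1$, or if $|S|=2$ and $S$ induces an edge, then $G_i\cap G_j = G[S]$ is complete ($K_1$ or $K_2$), and iterating Lemma~\ref{lem:cliqueintersection} over the $G_i$'s directly yields that $C^k_G$ is a Kempe class. This disposes of both of these subcases cleanly.

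The remaining case is $S=\{u,v\}$ with $uv\notin E(G)$. Here $G_i\cap G_j = \{u,v\}$ is not complete, so Lemma~\ref{lem:cliqueintersection} does not apply directly, and this is the main obstacle. My plan is to augment: set $G^+ := G+uv$ and $G_i^+ := G_i+uv$, so $G_i^+\cap G_j^+ = K_2$ is complete. I would then argue that each $C^k_{G_i^+}$ is a Kempe class, typically via a BFS from whichever of $u,v$ has at most $k-2$ neighbours inside $C_i$ (giving $(k-1)$-degeneracy of $G_i^+$ and invoking Lemma~\ref{lem:degeneratekempe}), so that Lemma~\ref{lem:cliqueintersection} delivers $C^k_{G^+}$ as a Kempe class. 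Lifting back to $G$ is a routine simulation: a Kempe change in $G^+$ on a colour pair different from $\{\alpha(u),\alpha(v)\}$ is already a Kempe change in $G$, while a change on the pair $\{\alpha(u),\alpha(v)\}$ corresponds to flipping the chain of $u$ and (separately) the chain of $v$ in $G$. Hence any two colourings of $G$ in which $u$ and $v$ receive distinct colours are Kempe equivalent. To finish, I would show that every colouring with $\alpha(u)=\alpha(v)=a$ is Kempe equivalent to one with distinct colours on $u,v$: pick $b\neq a$ and flip the $(a,b)$-chain of $u$; if this chain avoids $v$, the colour of $v$ is unchanged and we land in the previous class, and otherwise one uses $k\geq 4$ to exhibit a suitable alternative $b$.

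The principal difficulty I anticipate is the degenerate sub-case in which both $u$ and $v$ have exactly $k-1$ neighbours inside the same component $C_i$. This forces $m=2$ and $u,v$ to have just one neighbour each in $C_{3-i}$, which makes $G_{3-i}^+$ trivially $(k-1)$-degenerate but allows $G_i^+$ to be $k$-regular, foiling the degeneracy plan for that piece. Here I expect to need an ad hoc treatment of $G_i^+$, either reducing to a 3-connected instance where Lemma~\ref{lem:matching} applies to a non-adjacent pair with a common neighbour, or invoking the degree-choosability criterion of Lemma~\ref{lem:gallai} on a carefully chosen induced subgraph to recover enough flexibility. The subsidiary difficulty of making the final transition step always succeed when $\alpha(u)=\alpha(v)$ will, I expect, be settled by a short counting argument exploiting $k\geq 4$.
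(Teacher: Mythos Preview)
Your architecture is exactly the paper's: take a minimum cut $S$, invoke Lemma~\ref{lem:cliqueintersection} directly when $G[S]$ is complete, and when $S=\{u,v\}$ with $uv\notin E$ add the edge, show the augmented pieces are $(k-1)$-degenerate, apply Lemmas~\ref{lem:degeneratekempe} and~\ref{lem:cliqueintersection}, and finally reduce $\alpha(u)=\alpha(v)$ to $\alpha(u)\neq\alpha(v)$. Your simulation of $G^+$--Kempe changes by $G$--Kempe changes is correct and is precisely what the paper's one-line ``since this is the set of $k$-colourings of $G'$'' is hiding.

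There are, however, two genuine gaps. The first is your ``principal difficulty''. Your proposed remedies do not work: Lemma~\ref{lem:matching} needs $3$-connectivity of $G$ itself, and even applied inside $G_i^+$ it only controls subclasses $C^k(\cdot,\cdot)$, not all of $C^k_{G_i^+}$; Lemma~\ref{lem:gallai} has no obvious leverage here. Worse, the bad case can force $G_i^+\cong K_{k+1}$: take two copies of $K_{k+1}$ minus an edge, joined by a $2$-matching on the non-adjacent pairs. Then \emph{no} $k$-colouring of $G$ has $\alpha(u)\neq\alpha(v)$, so your reduction step is vacuous and the plan proves nothing. The paper's fix is a simple trick you are missing: it \emph{changes the cut}. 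If both $u$ and $v$ have a unique neighbour on one side, replace $u$ in $S$ by that unique neighbour $u'$; with the new cut $\{u',v\}$ one checks that $u'$ has at least $k-2\geq 2$ neighbours on the shrunken side while $v$ still has $k-1$ on the other, so now one cut vertex has $\geq 2$ neighbours on each side of the partition, and both augmented pieces become $(k-1)$-degenerate.

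The second gap is your endgame. ``Pick $b$ and hope the $(a,b)$-chain of $u$ misses $v$, else use $k\geq 4$'' is not enough: there is no counting obstruction to every $(a,b)$-chain of $u$ containing $v$. The paper again exploits the cut. After the cut change above, either some cut vertex (say $u$) has $\geq 2$ neighbours on \emph{each} side --- then there are colours $c_1,c_2$ such that $u$ has no $c_1$-neighbour on one side and no $c_2$-neighbour on the other, a $(c_1,c_2)$-swap confined to one side clears a colour from $N(u)$, and a trivial change on $u$ finishes --- or $u$ has a unique neighbour $w$ on one side and $v$ a unique neighbour $z$ on the other; then one first equalises $\alpha(w)$ and $\alpha(z)$ by a swap confined to one side, after which the $(1,c)$-chain of $u$ with $c\neq\alpha(w)$ cannot reach $v$.
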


\begin{proof}
Let $S$ be a minimal vertex cut of $G$ that separates a connected component $C_1$ of $G-S$ from the rest of the graph $C_2$.  Let $G_1=G[C_1 \cup S]$ and $G_2=G[C_2 \cup S]$. Note that both $G_1$ and $G_2$ are $(k-1)$-degenerate.  Thus~$C^k_{G_1}$ and $C^k_{G_2}$ are Kempe classes by Lemma~\ref{lem:degeneratekempe}. 

If $G[S]$ is a clique, then, by Lemma~\ref{lem:cliqueintersection}, $C^k_G$ is a Kempe class.

As $G$ is not 3-connected, $|S| \leq 2$.  So if $G[S]$ is not a clique, then $S=\{x,y\}$ where $x$ and $y$ are a pair of non-adjacent vertices.  We can assume that one vertex of $S$ has more than one neighbour in $G_1$, and the other has more than one neighbour in $G_2$. Suppose instead that, for example,  $x$ and $y$ both have only one neighbour in $G_1$ (and so three neighbours in $G_2$). Note that we can, in this case, let $S$ be the cut of size~2 containing $y$ and the unique neighbour $u$ of $x$ in $G_1$. Now $S$ does have the desired property if we consider how it separates $C_1-\{u\}$ from $C_2 + \{x\}$:  $u$ has at least two neighbours distinct from $x$ and $y$ which are in $C_1-\{u\}$,  and $y$ has three neighbours in~$C_2+\{x\}$. So we can assume, without loss of generality, that~$x$ has at least two neighbours in $G_1$, and $y$ has at least two neighbours in $G_2$.

Let $G'_1$, $G'_2$ and $G'$ be the graphs obtained from, respectively, $G_1$, $G_2$ and~$G$ by adding the edge $xy$. As $x$ has degree at least 2 in $G_1$, it has degree at most $k-2$ in $G_2$, and thus degree at most $k-1$ in $G'_2$. Similarly $y$ has degree at most $k-1$ in $G'_1$. Hence $G'_1$ and $G'_2$ are $(k-1)$-degenerate and, by Lemma~\ref{lem:degeneratekempe}, $C^k_{G'_1}$ and $C^k_{G'_2}$ are Kempe classes.  By Lemma~\ref{lem:cliqueintersection}, $C^k_{G'}$ is a Kempe class.  

So the set of $k$-colourings of $G$ in which $x$ and $y$ have distinct colours are all Kempe equivalent (since this is the set of $k$-colourings of $G'$). To prove that $C^k_G$ is a Kempe class, it is enough to show that every $k$-colouring~$\alpha$ of~$G$ such that $\alpha(x)=\alpha(y)$ is Kempe equivalent to a $k$-colouring where $x$ and~$y$ are coloured differently.  We will describe how to find a series of Kempe changes that, starting from $\alpha$, give us a colouring in which $x$ and $y$ are not coloured alike.

We can assume that $\alpha(x)=\alpha(y)=1$.  If, for either $x$ or $y$, there is a colour that does not appear on any vertex in its neighbourhood then we can apply a trivial Kempe change to obtain the required colouring.  So we assume that, under $\alpha$, for each of $x$ and $y$, there is a neighbour of each colour and so exactly one colour appears on two neighbours.  We consider two cases.
 
\smallskip
\noindent {\bf Case 1:} \emph{Either $x$ or $y$ has at least two neighbours in each of $G_1$ and~$G_2$.}

\smallskip
\noindent 
Let us assume that it is $x$ that has two neighbours in both $G_1$ and $G_2$. There exist two colours --- let us say $2$ and $3$ --- such that no neighbour of $x$ in~$G_1$ is coloured $3$ and no neighbour of $x$ in $G_2$ is coloured $2$.  Consider the $(2,3)$-components of $G$ that include the neighbours of $x$ coloured 2. Since they are included in $G_1$, they do not contain any neighbour of $x$ coloured 3. So in the colouring obtained by a Kempe change of these components, the vertex~$x$ has no neighbour coloured 2. Thus by one further trivial Kempe change of~$x$, the required colouring is obtained.
 
 \smallskip
\noindent {\bf Case 2:} \emph{Neither $x$ nor $y$ has at least two neighbours in each of $G_1$ and~$G_2$.}

\smallskip
\noindent Since $S$ is minimal, $x$ has exactly one neighbour $w$ in $G_2$ and $y$ has exactly one neighbour $z$ in $G_1$. Let $\alpha(w)=2$. If $\alpha(z) \neq 2$, then consider the $(2, \alpha(z))$-component that contains $z$.    From the Kempe change of this component (which does not contain $x$, $y$ or $w$), we obtain a colouring where~$z$ is coloured~2.  Thus we can as well assume that $\alpha(z)=2$.  Consider the $(1,3)$-component that contains $x$.  As $x$ has no neighbour coloured 3 in $G_2$ and $y$ has no neighbour coloured 3 in $G_1$, this component does not contain~$y$.  Thus from the Kempe change of this component we obtain the required colouring. 
\end{proof}

\subsection{3-connected graphs with diameter at least 3}

We present a number of lemmas that will allow us to show that Theorem~\ref{thm:main} is true for 3-connected graphs with diameter at least 3.
 
If two neighbours $t_1$ and $t_2$ of a vertex $u$ are not adjacent, we say that $(t_1,t_2)$ is an \emph{eligible} pair of neighbours of $u$.  Let $P(u)$ denote the set of eligible pairs of neighbours of $u$.   We observe that in a regular connected non-complete graph, every vertex has an eligible pair of neighbours.

The next lemma follows from Lemma~\ref{lem:matching}.  (In fact, it is just a special case of Lemma~\ref{lem:matching}, but it is helpful to have it as a separate statement.)

\begin{lem}\label{lem:twomatchingneighbors}
Let $k \geq 3$ be a positive integer.  Let $G$ be a 3-connected $k$-regular graph $G$.  Let $u$ be a vertex in $G$ and let $(t_1,t_2)$ be an eligible pair in~$P(u)$. Then $C^k_G(t_1,t_2)$ is a Kempe class.  
\end{lem}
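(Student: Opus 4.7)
The plan is that this lemma is essentially a corollary of Lemma~\ref{lem:matching}: the pair $(t_1,t_2)\in P(u)$ consists by definition of two non-adjacent vertices that share the common neighbour $u$, so $t_1$ and $t_2$ satisfy the hypotheses of Lemma~\ref{lem:matching} with $G$ being 3-connected. The only thing Lemma~\ref{lem:matching} requires beyond those hypotheses is that $C^k_G(t_1,t_2)$ be non-empty; once this is checked, the conclusion is immediate.

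To verify non-emptiness, I would invoke Lemma~\ref{lem:degenerate}. Since $G$ is 3-connected, $k$-regular (hence of maximum degree $k$), and $t_1,t_2$ are non-adjacent with common neighbour $u$, Lemma~\ref{lem:degenerate} tells us that the identified graph $G_{t_1+t_2}$ is $(k-1)$-degenerate. A $(k-1)$-degenerate graph admits a $k$-colouring by the standard greedy argument on a $(k-1)$-elimination ordering. Via the obvious bijection between $C^k_{G_{t_1+t_2}}$ and $C^k_G(t_1,t_2)$ that was recorded just before Lemma~\ref{lem:matching}, this $k$-colouring corresponds to a $k$-colouring of $G$ in which $t_1$ and $t_2$ receive the same colour. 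Hence $C^k_G(t_1,t_2)\neq \emptyset$.

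With non-emptiness established, Lemma~\ref{lem:matching} applied to $t_1$ and $t_2$ directly yields that $C^k_G(t_1,t_2)$ is a Kempe class, completing the proof. There is no real obstacle here; the lemma is isolated simply for convenient later reference, and the only content beyond citing Lemma~\ref{lem:matching} is the one-line reduction through Lemma~\ref{lem:degenerate} to rule out the vacuous case.
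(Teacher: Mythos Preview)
Your proposal is correct and matches the paper's approach exactly: the paper states that this lemma is just a special case of Lemma~\ref{lem:matching}, and immediately afterward records the same non-emptiness observation via Lemma~\ref{lem:degenerate} that you spell out.
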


It is worth noting that as, by Lemma~\ref{lem:degenerate}, $G_{t_1+t_2}$ is $(k-1)$-degenerate, it has a $k$-colouring so $C^k_G(t_1,t_2)$ is non-empty.

\begin{lem}\label{lem:twopairsofmatchingneighbors}
Let $k \geq 4$ be a positive integer.  Let $G$ be a 3-connected \mbox{$k$-regular} graph.  Let $u$ and $v$ be two vertices of $G$ and let $(w_1,w_2)$ be an eligible pair in~$P(v)$. If, for every eligible pair~$(t_1,t_2)$ in $P(u)$, there is a $k$-colouring of~$G$ such that $t_1$ and $t_2$ are coloured alike and $w_1$ and~$w_2$ are coloured alike, then~$C^k_G$ is a Kempe class.
\end{lem}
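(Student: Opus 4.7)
The plan is to fix any reference $k$-colouring $\gamma \in C^k_G(w_1,w_2)$ and prove that every $k$-colouring $\alpha$ of $G$ is Kempe equivalent to $\gamma$; this is enough to show $C^k_G$ is a single Kempe class. The structural idea is to invoke Lemma~\ref{lem:twomatchingneighbors} twice --- once for an eligible pair at $u$ and once for the eligible pair $(w_1,w_2)$ at $v$ --- and to bridge the two applications using a colouring supplied by the hypothesis.

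The first key step is a pigeonhole observation at $u$. Since $G$ is $k$-regular, $u$ has exactly $k$ neighbours, and in any $k$-colouring $\alpha$ of $G$ these neighbours use only the $k-1$ colours distinct from $\alpha(u)$. Hence there must exist two neighbours $t_1,t_2$ of $u$ with $\alpha(t_1)=\alpha(t_2)$; in particular, $t_1$ and $t_2$ are non-adjacent, so $(t_1,t_2)\in P(u)$ and $\alpha \in C^k_G(t_1,t_2)$.

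Next, I would apply the hypothesis to this eligible pair $(t_1,t_2)\in P(u)$ to obtain a $k$-colouring $\beta$ of $G$ lying in $C^k_G(t_1,t_2)\cap C^k_G(w_1,w_2)$. By Lemma~\ref{lem:twomatchingneighbors} applied with the eligible pair $(t_1,t_2)\in P(u)$, the set $C^k_G(t_1,t_2)$ is a Kempe class, so $\alpha$ and $\beta$ are Kempe equivalent. Applying Lemma~\ref{lem:twomatchingneighbors} a second time, now with the eligible pair $(w_1,w_2)\in P(v)$, the set $C^k_G(w_1,w_2)$ is also a Kempe class, so $\beta$ and $\gamma$ are Kempe equivalent. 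Chaining gives $\alpha\sim\gamma$, which is what we wanted.

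I do not anticipate any real obstacle: the pigeonhole step is immediate, and the rest is simply a two-step invocation of Lemma~\ref{lem:twomatchingneighbors} glued together by the intermediate colouring the hypothesis hands us. The only minor point to verify is that $C^k_G(w_1,w_2)$ is non-empty so that a reference $\gamma$ exists, and this is the remark following Lemma~\ref{lem:twomatchingneighbors} (using Lemma~\ref{lem:degenerate}). The conceptual content of the lemma is thus entirely concentrated in the hypothesis itself --- the role of this lemma is to package that hypothesis into a clean sufficient condition for $C^k_G$ to be a Kempe class, which will be used later when verifying the 3-connected, diameter-$\geq 3$ case of Theorem~\ref{thm:main}.
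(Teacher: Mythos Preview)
Your proof is correct and follows essentially the same approach as the paper: both use pigeonhole at $u$ to write $C^k_G$ as the union of the Kempe classes $C^k_G(t_1,t_2)$ over $(t_1,t_2)\in P(u)$, and then use the hypothesis together with Lemma~\ref{lem:twomatchingneighbors} to glue these classes together through the common Kempe class $C^k_G(w_1,w_2)$. The only cosmetic difference is that you fix a reference colouring $\gamma$ and show everything is equivalent to it, whereas the paper phrases the same idea as ``a union of Kempe classes each intersecting a fixed Kempe class is itself a Kempe class.''
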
 

\begin{proof}
In a $k$-colouring of $G$ at most $k-1$ colours appear on the neighbours of $u$.  Thus at least two of its neighbours, which must be an eligible pair, are coloured alike.  That is, for every colouring $\alpha$ of $G$, there is an eligible pair~$(t_1,t_2)$ in $P(u)$ such that $\alpha$ belongs to $C^k_G(t_1,t_2)$.  So
\[
C^k_G = \bigcup_{(t_1, t_2) \in P(u)} C^k_G(t_1,t_2),
\]
and, as each $C^k_G(t_1,t_2)$ is, by Lemma~\ref{lem:twomatchingneighbors}, a Kempe class, we have that $C^k_G$ is a Kempe class if it contains a subset that is a Kempe class and intersects $C^k_G(t_1,t_2)$ for each $(t_1,t_2) \in P(u)$.  The premise of the lemma is that $C^k_G(w_1,w_2)$ intersects each $C^k_G(t_1,t_2)$, and it is, by Lemma~\ref{lem:twomatchingneighbors}, a Kempe class.
\end{proof}

So Lemma~\ref{lem:twopairsofmatchingneighbors} suggests an approach to proving that Theorem~\ref{thm:main} holds for 3-connected graphs.  We note first that it will be easier to apply if we know that $G$ has diameter at least 3, since then we can choose $u$ and $v$ such that their eligible pairs of neighbours are distinct.  We just need to prove that we can find the types of $k$-colourings that the premise of the lemma requires.  To do this we need a number of rather technical lemmas.

\begin{lem}\label{lem:no3cutcuttingaclique}
Let $k\geq 4$ be a positive integer.  Let $G$ be a $k$-regular 3-connected graph with a vertex cut $S$  of size $3$ such that one connected component $C$ of $G-S$ is a clique on $k$ vertices. Then $C^k_G$ is a Kempe class.
\end{lem}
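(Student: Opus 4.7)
The plan is to combine Lemma~\ref{lem:degeneratekempe} applied to the two natural subgraphs $G_1 := G[C \cup S]$ and $G_2 := G - C$, then glue them via Lemma~\ref{lem:cliqueintersection} when $G[S]$ is a triangle and via Lemma~\ref{lem:twopairsofmatchingneighbors} otherwise. First I would fix notation: for $i \in \{1, 2, 3\}$ let $C_i = N(s_i) \cap C$ and $c_i = |C_i|$. Since $C$ is a $k$-clique and $G$ is $k$-regular, each vertex of $C$ has exactly one neighbour in $S$, so $C = C_1 \sqcup C_2 \sqcup C_3$; 3-connectedness forces $c_i \geq 1$, hence also $c_i \leq k - 2$. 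A short edge count shows that $D := V(G) \setminus (S \cup C)$ is non-empty for $k \geq 4$.

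Next I would verify that $G_1$ and $G_2$ are $(k-1)$-degenerate. For $G_1$: any induced subgraph sitting inside $C$ has minimum degree at most $k-1$ since $C \cong K_k$, and any induced subgraph meeting $S$ contains some $s_i$ whose $G_1$-degree is $c_i + e_i$, where $e_i$ counts the $G[S]$-edges at $s_i$. Since $\sum_i (c_i + e_i) = k + 2|E(G[S])| \leq k + 6 < 3k$ for $k \geq 4$, at least one such $s_i$ has degree at most $k - 1$, and a short case check confirms that this forces minimum degree at most $k-1$ in every induced subgraph. For $G_2$: if some induced $H \subseteq G[D]$ had every vertex of degree $k$ in $H$, each vertex of $H$ would have all its $G$-neighbours inside $H$, so $H$ would be a nonempty union of connected components of $G$; connectedness of $G$ and $H \subseteq D \subsetneq V(G)$ rule this out. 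Hence $G[D]$ is $(k-1)$-degenerate, and combined with $\deg_{G_2}(s_i) = k - c_i \leq k-1$ so is $G_2$. Lemma~\ref{lem:degeneratekempe} then yields that $C^k_{G_1}$ and $C^k_{G_2}$ are Kempe classes.

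If $G[S]$ is a triangle, then $G_1 \cap G_2 = K_3$ and Lemma~\ref{lem:cliqueintersection} immediately gives that $C^k_G$ is a Kempe class. Otherwise I would apply Lemma~\ref{lem:twopairsofmatchingneighbors}: pick pairwise distinct $i, j, \ell \in \{1, 2, 3\}$, vertices $u \in C_i$, $v \in C_j$ and $y^* \in C_\ell$, and set $(w_1, w_2) := (s_j, y^*)$. The eligible pairs of $u$ are exactly the pairs $(s_i, x)$ with $x \in C \setminus C_i$. For each such $x$ I would exhibit a $k$-colouring $\alpha$ of $G$ satisfying $\alpha(s_i) = \alpha(x)$ and $\alpha(s_j) = \alpha(y^*)$: assign three pairwise distinct colours to $s_1, s_2, s_3$ (automatically proper on $G[S]$), extend to a bijection on the clique $C$ enforcing the two prescribed equalities (feasible by a Hall-type check which splits on whether $x \in C_j$ or $x \in C_\ell$ and uses $c_i \leq k - 2$), and finally extend the prescribed $S$-colouring to $D$.

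The principal obstacle will be this last construction. On the clique side one must verify the Hall condition simultaneously for $\pi(x)$ and $\pi(y^*)$ in both placements of $x$, a case analysis that is routine but requires care when some $c_m$ is extremal. On the $D$ side one must show that the prescribed $S$-colouring extends to a proper $k$-colouring of $D$; the intended route is to use the $(k-1)$-degeneracy of $G_2$ with an elimination ordering that processes the three vertices of $S$ first, and then colour $D$ greedily in reverse so that at each step at most $k - 1$ colours are forbidden. Once these two points are verified, Lemma~\ref{lem:twopairsofmatchingneighbors} concludes that $C^k_G$ is a Kempe class.
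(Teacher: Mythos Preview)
Your route is genuinely different from the paper's.  The paper picks one eligible pair $(w_1,w_2)$ with $w_1\in C$ and $w_2$ a neighbour of some $s_i$ lying \emph{outside} $S\cup C$, invokes Lemma~\ref{lem:twomatchingneighbors} to make $C^k_G(w_1,w_2)$ a Kempe class, and then uses two or three explicit Kempe changes inside the clique $C$ to drag an arbitrary colouring into $C^k_G(w_1,w_2)$.  Your plan instead tries to manufacture, for every eligible pair of a fixed $u\in C$, a colouring with two prescribed coincidences and then appeal to Lemma~\ref{lem:twopairsofmatchingneighbors}.

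There is a real gap in the non-triangle branch.  First, the case $x=y^*$ is unavoidable: since $y^*\in C_\ell\subseteq C\setminus C_i$, the pair $(s_i,y^*)$ is one of the eligible pairs of $u$, and for it you would need $\alpha(s_i)=\alpha(y^*)=\alpha(s_j)$, contradicting your stipulation of three pairwise distinct colours on $S$.  This alone could be patched by choosing $i,j$ with $s_is_j\notin E(G)$ and allowing $\alpha(s_i)=\alpha(s_j)$ there, but you do not do so.

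Second, and more seriously, the extension from $S$ to $D$ fails as stated.  Every vertex of $D$ has all its $G$-neighbours in $S\cup D$, hence degree exactly $k$ in $G_2$; so there is no $(k-1)$-elimination ordering of $G_2$ that \emph{ends} in $S$, and Lemma~\ref{rem:EliminationOrdering} cannot be invoked to extend a prescribed colouring of $S$.  In list-colouring terms, extending amounts to $L$-colouring $G[D]$ with $|L(d)|=\deg_{G[D]}(d)$, which by Lemma~\ref{lem:gallai} can fail when a component of $G[D]$ is a Gallai tree.  A concrete obstruction for $k=4$: let $D=\{d_1,d_2\}$ with $d_1d_2\in E$ and both $d_i$ adjacent to all of $s_1,s_2,s_3$; take $c_1=2$, $c_2=c_3=1$ and $E(G[S])=\{s_2s_3\}$.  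This $G$ is $4$-regular and $3$-connected, $G[S]$ is not a triangle, yet with three distinct colours on $S$ only colour $4$ remains for both $d_1$ and $d_2$.  Thus your ``extend to $D$'' step breaks exactly in the case you are treating; any repair would again require repeating a colour on two non-adjacent $s_i$'s and a separate analysis of Gallai-tree components of $G[D]$, none of which the proposal supplies.
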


\begin{proof}
Every vertex of $C$ has exactly one neighbour that is not in $C$ and so must therefore be in~$S$. Thus $S$ weakly dominates $V(C)$. As~$C$ has at least four vertices each adjacent to exactly one of the three vertices of $S$, we can assume that there is a vertex in $S$ with at least two neighbours in $C$.  Let this vertex be $u$. Let $w_1$ be a neighbour of $u$ in $C$.  Let $w_2$ be a neighbour of~$u$ not in $S \cup V(C)$.  (If $u$ does not have such neighbours, then $S \setminus \{u\}$ is a vertex cut and $G$ is not 3-connected.)

By Lemma~\ref{lem:twomatchingneighbors}, $C^k_G (w_1,w_2)$ is a Kempe class.  Let $\alpha$ be a $k$-colouring of $G$.  The lemma follows if we can show that $\alpha$ is Kempe equivalent to a colouring in $C^k_G (w_1,w_2)$; that is, if by performing a number of Kempe changes we can reach a colouring where $w_1$ and $w_2$ are coloured alike.

Let us assume that $\alpha(w_1)=1$.  If $\alpha(w_2)=1$, we are done so assume that $\alpha(w_2)=2$.  Let $w_3$ be the vertex in $C$ for which $\alpha(w_3)=2$ (as $C$ is a clique on $k$ vertices every colour appears on exactly one vertex).

If $w_3$ is a neighbour of $u$, then $\{w_1, w_3\}$ is a Kempe chain, and, by a single Kempe change, we obtain the required colouring.  Otherwise suppose that the neighbour of $w_3$ in $S$ is $v \neq u$.  As $u$ has at least two (distinctly coloured) neighbours in $C$, we can assume that there is a neighbour $w_4$ of $u$ in~$C$ such that $\alpha(w_4) \neq \alpha(v)$ (possibly $w_4=w_1$).  Then $\{w_3, w_4\}$ is a Kempe chain.  If we exchange the colours of this chain, then either $w_4=w_1$ and we are done or, as before, we have two neighbours of~$u$ coloured~$1$ and $2$ which form a Kempe chain, and one more Kempe change is needed to obtain the required colouring.
\end{proof}

At various points in the proofs of the following lemmas we will have defined a graph $G$ with vertices $u$ and $v$ and eligible pairs $(t_1,t_2) \in P(u)$ and $(w_1,w_2) \in P(v)$.  Whenever this is the case, we will use the following definitions.  Let $G^+$ be the graph obtained from $G$ by identifying $t_1$ and $t_2$ and then identifying $w_1$ and $w_2$, and label the two vertices created $t$ and $w$ respectively.  Let $G^-$ be the graph obtained from $G^+$ by deleting $t$ and $w$ (so~$G^-$ is the graph obtained from $G$ by deleting $t_1$, $t_2$, $w_1$ and $w_2$).  The blocks of a graph $G$ that  contain at most one cut vertex of $G$ are called end blocks, and all other blocks are called intermediate blocks.

\begin{lem}\label{lem:diam2} 
Let $k \geq 4$ be a positive integer.  Let $G$ be a 3-connected $k$-regular graph.  Let $u$ and $v$ be two vertices of $G$ and let $(w_1,w_2)$ be an eligible pair in $P(v)$ such that neither of $w_1$, $w_2$ is adjacent to $u$.   Suppose that~$C^k_G$ is not a Kempe class. Then there is an eligible pair $(t_1,t_2)$ in $P(u)$ such that~$G$ contains a subgraph weakly dominated by both $\{t_1,t_2\}$ and $\{w_1,w_2\}$ that is isomorphic to $K_{k-1}$.
 \end{lem}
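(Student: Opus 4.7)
The plan is to argue by the contrapositive of Lemma~\ref{lem:twopairsofmatchingneighbors}. Since $C^k_G$ is not a Kempe class, that lemma yields an eligible pair $(t_1,t_2) \in P(u)$ for which no $k$-colouring of $G$ has both $\{t_1,t_2\}$ and $\{w_1,w_2\}$ monochromatic; equivalently, the identification graph $G^+$ (obtained from $G$ by merging $t_1$ with $t_2$ into a new vertex $t$ and $w_1$ with $w_2$ into a new vertex $w$) has no $k$-colouring. The hypothesis that neither of $w_1, w_2$ is a neighbour of $u$ ensures that $\{t_1,t_2\}$ and $\{w_1,w_2\}$ are disjoint. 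We will show that for this pair, $G$ contains the required $K_{k-1}$ weakly dominated by both $\{t_1,t_2\}$ and $\{w_1,w_2\}$.

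To leverage the non-colourability of $G^+$, the plan is to try to $k$-colour $G^+$ by first assigning colours $c_t$ and $c_w$ to $t$ and $w$ and then extending via a list colouring of $G^-=G^+-\{t,w\}$, where each vertex $x$ receives the list $L(x)$ obtained from $\{1,\dots,k\}$ by removing the colours of those among $t,w$ adjacent to $x$ in $G^+$. Writing $a(x) = |N(x) \cap \{t_1,t_2\}|$ and $b(x) = |N(x) \cap \{w_1,w_2\}|$, one gets $|L(x)| \ge \deg_{G^-}(x) = k - a(x) - b(x)$. Since $G^+$ is not $k$-colourable, for each available choice of $(c_t,c_w)$ — both $c_t = c_w$ (allowed when $t \not\sim w$ in $G^+$) and $c_t \ne c_w$ (always allowed) — some connected component $C$ of $G^-$ must fail to admit an $L$-colouring.

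The core of the plan is a structural analysis of such a bad component $C$. By Lemma~\ref{lem:gallai}, every block of $C$ is a complete graph or an odd cycle, and every vertex of $C$ is \emph{tight}, meaning $|L(x)| = \deg_{G^-}(x)$. For $c_t \ne c_w$, tightness forces $(a(x),b(x)) \in \{(0,0),(1,0),(0,1),(1,1)\}$, so every vertex has degree at least $k-2 \ge 2$ in $C$; this rules out odd cycles as end blocks and forces each end block to be a clique $K_n$ with $n \in \{k-1,k,k+1\}$. The case $n=k+1$ contradicts connectivity of $G$, since the non-cut vertices would have no neighbour in $\{t_1,t_2,w_1,w_2\}$. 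The case $n=k$ is eliminated using (i) 3-connectivity of $G$, which forces at least three of the four pair-vertices to have neighbours in $C$; (ii) Lemma~\ref{lem:no3cutcuttingaclique}, which rules out exactly three (as otherwise $C^k_G$ would already be a Kempe class); and (iii) a Hall-style SDR argument showing that a $K_k$ component dominated by all four pair-vertices is always $L$-colourable under $(c_t,c_w) = (1,2)$.

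This leaves only the possibility that an end block of $C$ is a $K_{k-1}$ whose non-cut vertices all satisfy $(a(x),b(x)) = (1,1)$. The final step is to show that $C$ cannot have any additional block attached to this $K_{k-1}$: any cut vertex $y$ of the $K_{k-1}$ would satisfy $a(y) + b(y) \le 1$, and the resulting slack in $L(y)$, together with the alternative choice $c_t = c_w$ when $t \not\sim w$ (or a careful SDR when $t \sim w$), yields an $L$-colouring of $C$, contradicting badness. Hence $C$ itself is the desired $K_{k-1}$ in $G$ weakly dominated by both $\{t_1,t_2\}$ and $\{w_1,w_2\}$. The main obstacle is precisely this single-block reduction: different choices of $(c_t,c_w)$ produce bad components of different shapes, so one must coordinate the list constraints with the regularity of $G$, its 3-connectivity, and Lemma~\ref{lem:no3cutcuttingaclique} in a unified way to rule out all multi-block configurations.
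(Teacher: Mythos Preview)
Your opening framework matches the paper's exactly: use Lemma~\ref{lem:twopairsofmatchingneighbors} to obtain $(t_1,t_2)$ with $G^+$ not $k$-colourable, fix $(c_t,c_w)=(1,2)$, and analyse a component $C$ of $G^-$ that fails the induced list-colouring via Lemma~\ref{lem:gallai}. The tightness computation giving $(a(x),b(x))\in\{(0,0),(1,0),(0,1),(1,1)\}$ is correct. However, two of your subsequent claims do not go through.

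\textbf{Odd-cycle end blocks when $k=4$.} You write that ``every vertex has degree at least $k-2\ge 2$ in $C$; this rules out odd cycles as end blocks.'' It does not: when $k=4$ the non-cut vertices of an odd-cycle end block have degree exactly $2=k-2$, which is perfectly consistent with $(a,b)=(1,1)$. The paper treats this as a genuine case (its Case~3.3), and disposing of it requires a separate counting argument showing that a long odd cycle, or any second end block, would force too many edges from $\{t,w\}$ into $C$.

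\textbf{The single-block reduction.} Your argument that $C$ cannot have a second block attached to the $K_{k-1}$ is where the real gap lies. You propose to switch to $c_t=c_w$, noting that the non-cut vertices of the $K_{k-1}$ then acquire slack. But this only makes sense if $t\not\sim w$ in $G^+$, and nothing you have said establishes that; ``a careful SDR when $t\sim w$'' is not a proof. More seriously, even when $t\not\sim w$, showing that \emph{this particular} $C$ becomes $L$-colourable under $c_t=c_w$ does not contradict anything: $G^+$ fails to be $k$-colourable, so some (possibly different) component is bad for that choice too, and you have not constrained its structure. The paper's solution is quite different: it proves a counting claim (its Claim~\ref{claim1}) bounding the number of edges from $\{t,w\}$ into $C$, uses 3-connectedness to show $t\not\sim w$ in the relevant configurations, and then applies Brooks' Theorem to an auxiliary graph $G^\dagger$ (formed by identifying $t$ and $w$ in $G^+[V\setminus C]$) to produce a $k$-colouring of \emph{all} of $G^+\setminus C$ with $t$ and $w$ coloured alike, which then extends over $C$. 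This is what actually eliminates the two-end-block configurations (handled separately for $k\ge 5$ and $k=4$), and it is substantially more than what your outline supplies.
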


\begin{proof}
As $C^k_G$ is not a Kempe class, we know, by Lemma~\ref{lem:twopairsofmatchingneighbors}, we can choose as $(t_1, t_2)$ an eligible pair in $P(u)$ such that there is no $k$-colouring of $G$ such that $t_1$ and $t_2$ are coloured alike and $w_1$ and $w_2$ are coloured alike.  We note that $t_1$, $t_2$, $w_1$ and $w_2$ are distinct as the latter two are not adjacent to $u$.  So here $G^+$ is well-defined and, by our choice of $t_1$ and $t_2$, does not have a $k$-colouring.  To prove the lemma, we attempt to construct a $k$-colouring of~$G^+$ and use the fact that we know that we cannot succeed to lead us to the conclusion.

For a component $C$ of $G^-$, let $G^\ast_C$ be $G^+[C \cup \{t,w\}]$.   For each $C$, we shall show that one of the following holds:
\begin{enumerate}[(1)]
\setlength\itemsep{0pt}
\setlength{\parskip}{0pt}
    \setlength{\parsep}{0pt} 
\item the structure of $G^\ast_C$ implies that $G^+$ has a $k$-colouring, or
\item there is a $k$-colouring of $G^\ast_C$ where $t$ and $w$ are coloured 1 and 2 respectively, or
\item $G[C]$ contains a subgraph weakly dominated, in $G$, by both $\{t_1,t_2\}$ and $\{w_1,w_2\}$ that is isomorphic to $K_{k-1}$.
\end{enumerate}
By the assumption that $G^+$ has no $k$-colouring, there cannot be any component that satisfies (1), and it cannot be the case that every component satisfies (2).  Thus there must be at least one component that satisfies (3), and the lemma follows.

\smallskip
\noindent {\bf Case 1:} There is a vertex $x$ in $C$ that has degree less than $k$ in $G^\ast_C$.

\smallskip
\noindent  We can find a $k$-colouring of $G^\ast_C$ with $t$ and $w$ coloured with 1 and 2 by applying Lemma~\ref{lem:elimination} to $G^\ast_C$ and $x$ with $S = \{t,w\}$.  So $C$ satisfies (2).

\smallskip
\noindent {\bf Case 2:} Every vertex in $C$ has degree $k$ in $G^\ast_C$ and $G[C]$ is degree-chooseable.

\smallskip
\noindent  We create a list assignment $L$ for $G[C]$.  For each vertex $x$ in $C$, let

\[  
L(x) = \left\lbrace    \begin{array}{ll}
\{1, \ldots, k\} & \mbox{if $x$ is not adjacent to $t$ or $w$,} \\
\{2, \ldots, k\} & \mbox{if $x$ is adjacent to $t$ but not $w$,} \\
\{1,3 \ldots, k\} & \mbox{if $x$ is adjacent to $w$ but not $t$,} \\
\{3 \ldots, k\} & \mbox{if $x$ is adjacent to both $t$ and $w$.} 
\end{array}
\right.
\]
Note that $|L(x)|$ is equal to the degree of $x$ in $G[C]$ since $|L(x)| = k-|N_{G^+}(x)\cap \{t,w\}|$.  As $G[C]$ is degree-chooseable, there is a colouring of $G[C]$ that respects $L$, and, as $1 \notin L(x)$ if $x$ is adjacent to $t$ and $2 \notin L(x)$ if $x$ is adjacent to $w$, this provides a $k$-colouring of $G^\ast_C$ when $t$ and $w$ are coloured~$1$ and $2$.  Thus $C$ satisfies (2).

\smallskip
\noindent {\bf Case 3:} Every vertex in $C$ has degree $k$ in $G^\ast_C$ and $G[C]$ is not degree-chooseable.

\smallskip
\noindent By Lemma~\ref{lem:gallai}, each block of $G[C]$ is either a clique or an odd cycle.  For an end block $B$ of $G[C]$, let $B^-$ be the vertices of $B$ that are not a cutvertex in~$G[C]$ (so $B^-$ contains one fewer vertex than $B$ unless $G[C]$ contains only one block and then $B^-=B$).  The degree of each vertex of $B^-$ in $G^\ast_C$ is $k$, and this is the sum of the number of neighbours it has in $C$ and the number of neighbours it has in $\{t,w\}$.  As the former is the same for each vertex (as they belong to just one block that is a cycle or a clique), the latter must also be the same for each vertex.  So let $d_B \in \{0,1,2\}$ be the number of neighbours in $\{t,w\}$ of each vertex of $B^-$.

\smallskip
\noindent {\bf Case 3.1:} There is an end block $B$ of $C$ with $d_B=0$.

\smallskip
\noindent This implies that each vertex of $B^-$ is joined to $k$ vertices in $C$ which, as $k \geq 4$, implies that $B$ is a clique rather than a cycle, and so $B$ is isomorphic to $K_{k+1}$, contradicting that $G$ is connected and non-complete.

\smallskip
\noindent {\bf Case 3.2:} There is an end block $B$ of $C$ with $d_B=1$.

\smallskip
\noindent  Note that $B$ must be a clique as if it were an odd cycle the degree of each vertex of $B^-$ in $G^\ast_C$ would be $3 \neq k$.  

Suppose every vertex in $B^-$ is adjacent to $t$ (the case where they are all adjacent to $w$ is equivalent).  We cannot have $B=B^-$ since then $t$ is a cutvertex, and so $\{t_1,t_2\}$ is a cutset in~$G$, contradicting that it is 3-connected.  So let $x$ be the cutvertex of $G[C]$ in $B$. Then $x$ has exactly one neighbour~$s$ in~$C \setminus B^-$. Thus $\{s,t_1,t_2\}$ is a vertex cut of $G$ that weakly dominates $B$ which is a clique on $k$ vertices.  Therefore $C^k_G$ is a Kempe class by Lemma~\ref{lem:no3cutcuttingaclique}; a contradiction.

So there must be vertices $y$ and $z$ in $B^-$ such that $y$ is adjacent to $t$ (but not $w$) and $z$ is adjacent to $w$ (but not $t$).  We show that we can colour $t$ and $w$ with 1 and 2, and extend this to a $k$-colouring of $G^\ast_C$.  First colour $z$ with~$1$.  Note that a vertex $h$ other than $y$ and $z$ in $B^-$ has degree less than~$k$ in $G^\ast_C \setminus \{y\}$. So we can apply Lemma~\ref{lem:elimination} to $G^\ast_C \setminus \{y\}$ with $S = \{t,w,z\}$ and $x =h$ (if $B^-$ does not contain three vertices then the degree of $y$ and~$z$ in $G^\ast_C$ is at most $3<k$).  Finally colour $y$, which is possible as two of its neighbours are coloured alike.  Thus $C$ satisfies (2).

\smallskip
\noindent For the remaining cases, we will need the following claim.

\begin{claim} \label{claim1}
If $u$ and $v$ are not in $C$, then
\begin{enumerate}[\normalfont A.]
\setlength\itemsep{0pt}
\setlength{\parskip}{0pt}
    \setlength{\parsep}{0pt} 
\item each of $t$ and $w$ is adjacent to at most $2k-2$ vertices in $C$,
\item one of $t$ and $w$ is adjacent to at most $2k-3$ vertices in $C$, 
\item if each of $t$ and $w$ has at least $2k-3$ neighbours in $C$, then $t$ is not adjacent to $w$, and
\item if the sum of the number of neighbours of $t$ and $w$ in $C$ is at least $4k-6$, then $G^+[V \setminus C]$ has a $k$-colouring in which $t$ and $w$ are coloured alike.
\end{enumerate}
\end{claim}

\noindent We note that this claim can be applied within Case 3; we know that every vertex in $C$ has degree $k$ in $G^\ast_C$, and $u$ and $v$ have degree less than $k$ since a pair of neighbours --- $t_1$ and $t_2$ or $w_1$ or $w_2$ --- were identified when $G^+$ was formed from $G$.  We prove each part of the claim (we give a proof only for the statement about $t$ when the argument for $w$ is equivalent).    We keep in mind that, for each edge incident with $t$ in $G^+$, there is a corresponding edge or edges incident with $t_1$ or $t_2$ in $G$.
\begin{enumerate}[A.]
\setlength\itemsep{0pt}
\setlength{\parskip}{0pt}
    \setlength{\parsep}{0pt} 
\item The total number of edges incident with $t_1$ and $t_2$ in $G$ is $2k$, but 2 of these are incident with $u$ which is not in $C$.
\item If $t$ and $w$ both have $2k-2$ neighbours in $C$, then in $G$, each of $t_1$, $t_2$, $w_1$ and $w_2$ only has neighbours in $C \cup \{u,v\}$.  Then $\{u,v\}$ is a cutset as it separates $C \cup \{t_1,t_2, w_1, w_2\}$ from the rest of $G$ which is not empty as $u$ has at least~4 neighbours and is not adjacent to any vertex in $C \cup \{w_1, w_2\}$.  This contradicts that $G$ is 3-connected.
\item If $t$ and $w$ both have $2k-3$ neighbours in $C$ and are adjacent, then, in $G$, each of $t_1$, $t_2$, $w_1$ and $w_2$ only has neighbours in $C \cup \{u,v, t_1, t_2, w_1, w_2\}$, and, as in the previous part, this implies that $\{u,v\}$ is a cutset.
\item We can say that $t$ and $w$ are not adjacent: either one of $t$ and $w$ has $2k-2$ neighbours in $C$ so its only other neighbour is either $u$ or $v$, or they both have $2k-3$ neighbours in $C$ and so we can apply the previous part of the claim.  In $G$, there are at least $4k-6$ edges from $\{t_1,t_2, w_1, w_2\}$ to the vertices of $C$ so at most 6 other incident edges.  And, as $t_1$ and $t_2$ are both adjacent to $u$, and $w_1$ and $w_2$ are both adjacent to $v$, in $G^+[V \setminus C]$ the sum of the degrees of $t$ and $w$ is at most 4.  Let $G^\dagger$ be the graph formed from $G^+[V \setminus C]$ by identifying $t$ and $w$ to form a new vertex with degree at most 4.  Thus every vertex in $G^\dagger$ has degree at most $k$, and the graph is not isomorphic to $K_{k+1}$ (since $u$, for example, has degree less than $k$) so, by Brooks' Theorem, $G^\dagger$ has a $k$-colouring.  From this colouring, we can obtain a colouring of $G^+[V \setminus C]$ in which $t$ and $w$ are coloured alike.  This completes the proof of the claim.
\end{enumerate}

\smallskip
\noindent {\bf Case 3.3:} For every end block $B$ of $C$, $d_B=2$, and there is one end block~$B_1$ that is not a clique.

\smallskip
\noindent So $B_1$ is an odd cycle on at least five vertices.  In $G^\ast_C$, each vertex of $B_1^-$ has degree $k$ and is adjacent to two vertices in $B_1$ and $t$ and $w$. Hence we have $k=4$.  If either $B_1$ has more than five vertices or $C$ has more than one end block, then there are at least six vertices in end blocks that are not cutvertices of~$C$. So these vertices are adjacent to both $t$ and $w$, which therefore each have at least $6=2k-2$ neighbours in $C$, contradicting Claim~\ref{claim1}.B.  So $C=B_1$ is a 5-cycle, and the sum of the number of neighbours of $t$ and $w$ in $C$ is $10=4k-6$.   Thus, by Claim~\ref{claim1}.D, $G^+[V \setminus C]$ has a 4-colouring in which $t$ and $w$ are coloured alike.  We  can extend this colouring to the whole of $G^+$ by using the other 3 colours on $B_1$.  So $C$ satisfies (1).

\smallskip
\noindent {\bf Case 3.4:} For every end block $B$ of $C$, $d_B=2$ and $B$ is a clique.

\smallskip
\noindent Notice that each end block is isomorphic to $K_{k-1}$.  If there is only one end block, then, as it is weakly dominated by $\{t_1,t_2\}$ and $\{w_1, w_2\}$ in $G$, $C$ satisfies (3).  If there are at least three end blocks, then there are at least $3(k-2)$ vertices in $C$ adjacent to both $t$ and $w$.  As, for $k \geq 4$, we have $3k-6 \geq 2k-2$, this contradicts Claim~\ref{claim1}.B.

So we can assume that $C$ has exactly two end blocks each isomorphic to~$K_{k-1}$.  Note that an  intermediate block $B$ of $C$ that is a clique on more than two vertices has vertices (the ones that are not cutvertices in $G[C]$) whose $k$ neighbours are each either in $B$ or in $\{t, w\}$.  In fact, at least one neighbour must be in $\{t,w\}$, else $B$ is isomorphic to $K_{k+1}$ and not connected to the rest of $G$. Therefore $B$ is isomorphic to either $K_{k-1}$ or $K_k$.  

\smallskip
\noindent {\bf Case 3.4.1:} $k \geq 5$.

\smallskip
\noindent
No block is an odd cycle (since the vertices that are not cutvertices in the cycle would have degree at most 4 in $G^\ast_C$).  So the blocks of $C$ are each isomorphic to $K_2$, $K_{k-1}$ or $K_k$ and, for each cutvertex, one of the two blocks it belongs to must be $K_2$ else it would have degree at least $2(k-2)>k$.  Thus the cutvertex of each end block is also adjacent to one of $t$ and $w$, so there are $4k-6$ edges from $t$ and $w$ to vertices of the two end blocks.  If there is an intermediate block that is isomorphic to $K_{k-1}$ or $K_k$, then it contains at least two vertices that are not cutvertices, and these are also joined to at least one of $t$ and $w$.  So the sum of the number of neighbours of $t$ and $w$ in $C$ is at least $4k-4$; a contradiction to the first two parts of Claim~\ref{claim1}.  Therefore the only intermediate block is $K_2$, and there is exactly one of these (if there are none, the two end blocks intersect and the cutvertex has degree more than $k$; if there is more than 1, there are vertices that in $G[C]$ have degree 2 so have degree at most 4 in $G^\ast_C$).  So $G[C]$ contains two disjoint cliques each isomorphic to $K_{k-1}$ joined by a single edge.  Thus the sum of the number of neighbours of $t$ and $w$ in $C$ is exactly $4k-6$, and we can assume, by Claim~\ref{claim1}.D, that $G^+[V \setminus C]$ has a $k$-colouring in which $t$ and $w$ are coloured alike.  This can be extended to a colouring of $G^+$ as $G[C]$ is easily seen to be $(k-1)$-colourable.  So $C$ satisfies (1).

\smallskip
\noindent {\bf Case 3.4.2:} $k=4$.

\smallskip
\noindent  Let the two end blocks be $B_1$ and $B_2$ (both are isomorphic to $K_3$).  If they intersect in a vertex, then we can colour $t$ and $w$ with 1 and 2, colour the vertex in both $B_1$ and $B_2$ with 1 and the other vertices with 3 and 4.  So $C$ satisfies (2).

For the remaining cases, we  note that Claim~\ref{claim1}.D says that if there are at least 10 edges joining $t$ and $w$ to $C$ we can assume they are coloured alike in a $4$-colouring of $G^+[V \setminus C]$. And Claim~\ref{claim1}.A and ~\ref{claim1}.B say that there cannot be more than 11 edges from $t$ and $w$ to $C$.

If $G[C]$ is $B_1$ and $B_2$ plus an edge between them, then there are 10 edges from $t$ and $w$ to $C$, and clearly $G[C]$ is 3-colourable so $C$ satisfies (1) by Claim~\ref{claim1}.D. 

Suppose that $G[C]$ contains more blocks than $B_1$, $B_2$ and an additional~$K_2$.  If $C$ does not contain a $K_4$, then either there is a block isomorphic to $K_3$ or a longer odd cycle that contains a vertex $x$ that is not a cutvertex, or there is a cutvertex $x$ that belongs to two blocks both isomorphic to $K_2$.  In both cases, $x$ must be joined to both $t$ and $w$ which are therefore again joined by at least 10 edges to $C$, and, as there is no $K_4$, $G[C]$ is 3-colourable and, by Claim~\ref{claim1}.D, $C$ again satisfies~(1). 

If $C$ does contain a $K_4$, then the two vertices that are not cutvertices are both incident to one of $t$ and $w$.   And the cutvertices in $B_1$ and $B_2$ are each either adjacent to one of $t$ or $w$, or belong to a $K_3$ or a longer odd cycle that contains a vertex adjacent to both $t$ and $w$.  In any case, $t$ and $w$ are incident to at least 12 edges joining them to $C$, and we have a contradiction.
\end{proof}

\begin{lem}\label{lem:diam3}
Let $k \geq 4$ be a positive integer.  Let $G$ be a 3-connected $k$-regular graph. Let $u$ and $v$ be two vertices of $G$ that are not adjacent.   Let $(w_1,w_2)$ be an eligible pair in $P(v)$ neither of which is adjacent to $u$.  Then~$C^k_G$ is a Kempe class.
\end{lem}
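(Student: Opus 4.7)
Suppose for contradiction that $C^k_G$ is not a Kempe class. Applying Lemma~\ref{lem:diam2} to $(u, v, (w_1, w_2))$ (valid since $w_1, w_2 \not\sim u$), we obtain an eligible pair $(t_1, t_2) \in P(u)$ and a subgraph $H \cong K_{k-1}$ weakly dominated in $G$ by both $\{t_1, t_2\}$ and $\{w_1, w_2\}$. The six vertices $u, v, t_1, t_2, w_1, w_2$ are pairwise distinct (using $u \not\sim v$, $w_1, w_2 \not\sim u$, and the eligibility conditions), and neither $u$ nor $v$ lies in $V(H)$, since each is adjacent to both members of one of the two pairs, contradicting weak domination. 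A degree count then shows that each vertex of $H$ has $k-2$ neighbors inside $H$, exactly one in $\{t_1, t_2\}$, and exactly one in $\{w_1, w_2\}$, exhausting all $k$ of its neighbors; hence $\{t_1, t_2, w_1, w_2\}$ is a vertex cut of $G$ separating $V(H)$ from the rest, and neither $u$ nor $v$ has any neighbor in $V(H)$.

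Partition $V(H)$ via $H_i := V(H) \cap N(t_i)$ and $H'_j := V(H) \cap N(w_j)$ for $i, j \in \{1, 2\}$. Suppose one of $H_1, H_2, H'_1, H'_2$ is empty — by symmetry, say $H_2 = \emptyset$. Then $t_1$ is adjacent to every vertex of $H$, so $V(H) \cup \{t_1\} \cong K_k$. Since $t_1$'s only neighbor outside $V(H)$ is $u$, and each vertex of $H$ has its outside neighbor in $\{w_1, w_2\}$, the set $\{u, w_1, w_2\}$ is a 3-cut of $G$ whose removal leaves $V(H) \cup \{t_1\}$ as a connected component isomorphic to $K_k$ (the rest of $G$ is nonempty, as it contains $v$ and $t_2$). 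This contradicts Lemma~\ref{lem:no3cutcuttingaclique}.

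It remains to rule out the case where all four parts $H_1, H_2, H'_1, H'_2$ are nonempty. This is where the hypothesis $u \not\sim v$ enters: the idea is to exploit the symmetry between $u$ and $v$ by reapplying Lemma~\ref{lem:diam2} with their roles swapped and $(t_1, t_2)$ playing the role of $(w_1, w_2)$. When neither $t_1$ nor $t_2$ is adjacent to $v$, the tuple $(v, u, (t_1, t_2))$ satisfies the hypothesis of Lemma~\ref{lem:diam2} and yields a second subgraph $H^* \cong K_{k-1}$ weakly dominated by $\{t_1, t_2\}$ and some eligible pair $(t^*_1, t^*_2) \in P(v)$; a structural analysis of how $H$ and $H^*$ are attached to $\{t_1, t_2\}$ across the two separating 4-cuts leads to the contradiction. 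The residual sub-case in which some $t_i$ happens to be adjacent to $v$ is handled by a finer direct analysis exploiting the $k$-regularity, the 3-connectedness of $G$, and the constraint $u \not\sim v$ to count edges between $\{t_1, t_2, w_1, w_2\}$, $\{u, v\}$, and the remaining vertices. I expect this last sub-case — showing that the all-nonempty configuration is genuinely incompatible with $u \not\sim v$ — to be the main technical obstacle.
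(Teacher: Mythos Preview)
Your first two paragraphs are correct and match the paper's proof exactly: apply Lemma~\ref{lem:diam2} to obtain $H\cong K_{k-1}$ weakly dominated by $\{t_1,t_2\}$ and $\{w_1,w_2\}$, then rule out any empty $H_i$ or $H'_j$ via Lemma~\ref{lem:no3cutcuttingaclique}.

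The third paragraph, however, is only a plan, not a proof. You propose to reapply Lemma~\ref{lem:diam2} to the tuple $(v,u,(t_1,t_2))$, but this is only legal when $t_1,t_2\not\sim v$, and you give no argument for the residual sub-case $t_i\sim v$ beyond ``a finer direct analysis''; you also leave the structural comparison of $H$ and $H^*$ unproved. There is no obvious degree count that kills the case $t_i\sim v$ (for instance $t_1$ may well be adjacent to $u$, to $v$, and to some proper nonempty subset of $V(H)$), so this is a genuine gap.

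The paper avoids the case split entirely by a different second application of Lemma~\ref{lem:diam2}. Pick $x\in V(H)$ adjacent to $w_1$ (choosing $w_1$ to have at least as many neighbours in $H$ as $w_2$) and, without loss of generality, to $t_1$. Then $(x,v)\in P(w_1)$, and both $x$ and $v$ are non-adjacent to $u$ (the former because $N(x)=V(H)\setminus\{x\}\cup\{t_1,w_1\}$, the latter by hypothesis). So Lemma~\ref{lem:diam2} applies to $(u,w_1,(x,v))$ with no side condition, yielding $(t_3,t_4)\in P(u)$ and $C'\cong K_{k-1}$ weakly dominated by $\{t_3,t_4\}$ and $\{x,v\}$. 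The point of this choice is that $N(x)$ is completely known, which makes the endgame short: since $x$ has a neighbour in $C'$, and that neighbour cannot be $w_1$, either some $t_i\in C'$ (forcing $t_i$ to miss both $t_3$ and $t_4$, a contradiction), or some $y\in V(H)\cap V(C')$, which forces $C'=V(H)\setminus\{x\}\cup\{w_2\}$ and hence $|N(w_2)\cap V(H)|=k-2\ge |N(w_1)\cap V(H)|$, contradicting $k\ge 4$. Reapplying the lemma with $w_1$ in the role of $v$ and $(x,v)$ in the role of $(w_1,w_2)$, rather than swapping $u$ and $v$, is the idea you are missing.
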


\begin{proof}
If $C^k_G$ is not a Kempe class, then, by Lemma~\ref{lem:diam2}, there is an eligible pair $(t_1,t_2)$ in $P(u)$ such that $G$ contains an induced subgraph isomorphic to~$K_{k-1}$ that is weakly dominated by both $\{t_1, t_2\}$ and $\{w_1, w_2\}$.  Let $C$ be the vertex set of this induced subgraph. Note that each vertex in $C$ is adjacent to the other $k-2$ vertices of $C$, to one of $\{t_1, t_2\}$ and to one of $\{w_1, w_2\}$, and so is not adjacent to $u$ or $v$ (neither of which can be in $C$ as they are each adjacent to both of the vertices in either $\{t_1, t_2\}$ or $\{w_1, w_2\}$).  We can assume that each of $\{t_1, t_2, w_1, w_2\}$ is adjacent to at least one vertex in $C$: if fewer than three of them have a neighbour in $C$, then $G$ is not 3-connected, and if exactly one of them, say $t_1$, has no neighbour in $C$, then, since $C \cup t_2$ would induce a clique on $k$ vertices that is weakly dominated by $\{u, w_1, w_2\}$ (every vertex in $C$ is adjacent to one of $\{w_1, w_2\}$ but not to $u$ and $t_2$ is adjacent to~$u$ but, considering its degree, not to either of $\{w_1, w_2\}$), and Lemma~\ref{lem:no3cutcuttingaclique} with $S = \{u, w_1, w_2\}$ is contradicted.

Assume, without loss of generality, that $w_1$ has at least as many neighbours in $C$ as $w_2$.  Let $x$ be a neighbour of $w_1$ in $C$, and assume, without loss of generality, that $x$ is also a neighbour of $t_1$.  Then $(x,v)$ is an eligible pair in~$P(w_1)$.  We apply Lemma~\ref{lem:diam2} to $u$, $w_1$ and $(x, v)$.  So, under the assumption that $C^k_G$ is not a Kempe class, there is a pair $(t_3,t_4)$ (not necessarily distinct from $(t_1,t_2)$) of eligible neighbours in $P(u)$ such that $G$ contains an induced subgraph isomorphic to $K_{k-1}$ that is weakly dominated by both~$\{t_3, t_4\}$ and~$\{x, v\}$.  Let $C'$ be the vertex set of this induced subgraph and, arguing as we did for $C$, we can assume that each of $\{t_3, t_4, x, v\}$ is adjacent to $C'$.

Suppose that neither $t_1$ nor $t_2$ belongs to $C'$.   The $k$ neighbours of $x$ are $C \setminus \{x\} \cup \{t_1, w_1\}$, and we know at least one of these vertices is in $C'$.  By definition, it is not $w_1$, and, by assumption, it is not $t_1$.  So there is a vertex $y \neq x$ that belongs to both $C$ and $C'$.  As $C'$ induces a clique, the other $k-2$ vertices of $C'$ are neighbours of $y$.  But, as none of $\{t_1, t_2, w_1, x\}$ are in $C'$, we must have that $C'$ is $C \setminus \{x\} \cup \{w_2\}$.  So $w_2$ is adjacent to every vertex of $C$ except $x$.  By our assumption that $w_1$ has at least as many neighbours as $w_2$ in $C$, we have that $C$ has only two vertices and so $k=3$.  This contradiction tells us that, in fact, at least one of $t_1$ and $t_2$ belongs to $C'$; let us assume it is $t_1$.

So $t_1$ has $k-2$ neighbours in $C'$.  It has two more neighbours: we know it must be adjacent to one of $\{v,x\}$ by the definition of $C'$, and we know that it is also adjacent to $u$.  But neither of $t_3$ and $t_4$ belongs to $C' \cup \{u,v,x\}$ so~$t_1$ is not adjacent to either of them.  This contradicts the definition of $C'$ and completes the proof.
\end{proof}

We can now conclude this subsection on graphs of diameter at least 3.

\begin{prop}\label{cor:diam3}
Let $k \geq 4$ be a positive integer.  Let $G$ be a $3$-connected $k$-regular graph with diameter at least $3$. Then $C_G^k$ is a Kempe class. 
\end{prop}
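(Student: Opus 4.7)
The plan is to derive Proposition \ref{cor:diam3} as an essentially immediate corollary of Lemma \ref{lem:diam3}; all that needs to be done is to exhibit a triple $(u, v, (w_1, w_2))$ satisfying that lemma's hypotheses. Since $G$ has diameter at least $3$, it is in particular non-complete (a complete graph has diameter $1$), so I would begin by choosing two vertices $u, v \in V(G)$ with $d_G(u, v) = 3$. Because their distance is $3$, $u$ and $v$ are certainly non-adjacent.

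Next, I would invoke the observation made just before Lemma \ref{lem:twomatchingneighbors} --- namely, that in a regular connected non-complete graph every vertex has an eligible pair of neighbours --- to pick some $(w_1, w_2) \in P(v)$. Any neighbour $w$ of $v$ satisfies $d_G(u, w) \geq d_G(u, v) - 1 = 2$, so in particular neither $w_1$ nor $w_2$ is adjacent to $u$. At this point every hypothesis of Lemma \ref{lem:diam3} is verified for this choice of $u$, $v$, and $(w_1, w_2)$, and the conclusion that $C^k_G$ is a Kempe class follows at once.

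There is no genuine obstacle in this step; the difficulty has been absorbed into Lemmas \ref{lem:diam2} and \ref{lem:diam3}. The role of the diameter-at-least-$3$ hypothesis is exactly to furnish, simultaneously, a non-adjacent pair $u, v$ and an eligible pair at $v$ whose members both avoid the neighbourhood of $u$, which is precisely the input format that Lemma \ref{lem:diam3} consumes.
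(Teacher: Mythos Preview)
Your proposal is correct and follows essentially the same route as the paper: choose vertices $u,v$ at distance at least $3$, use the regularity and non-completeness of $G$ to find an eligible pair $(w_1,w_2)\in P(v)$, observe that every neighbour of $v$ lies at distance $\geq 2$ from $u$, and apply Lemma~\ref{lem:diam3}. The only cosmetic difference is that you fix $d_G(u,v)=3$ rather than $\geq 3$; this is harmless (vertices at distance exactly $3$ exist along any shortest path realizing the diameter), though simply taking $d_G(u,v)\geq 3$ would avoid even that small justification.
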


\begin{proof}
Let $u$ and $v$ be two vertices in $G$ at distance at least $3$.   As $G$ has diameter 3, it is not a clique, and so, as it is regular, $v$ has a pair of non-adjacent neighbours, an eligible pair.  As no neighbour of $v$ is adjacent to $u$ (they must be at distance at least 2), the result follows from Lemma~\ref{lem:diam3}. 
\end{proof}

\subsection{3-connected graphs with diameter 2}

To complete the proof of Theorem~\ref{thm:main}, it only remains to consider 3-connected graphs of diameter 2.  

First we need two definitions.  For a vertex $v$ in a graph $G$, we denote by~$N(v)$ the \emph{neighbourhood} of $v$, that is, the set of vertices adjacent to $v$. The \emph{second neighbourhood} of $v$ is the set of vertices at distance 2 from $v$ in~$G$.

\begin{prop}\label{lem:complementbipartite}
Let $k \geq 4$ be a positive integer.  Let $G$ be a $3$-connected $k$-regular graph of diameter $2$. Then $C_G^k$ is a Kempe class. 
\end{prop}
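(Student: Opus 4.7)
I would argue by contradiction, assuming $C_G^k$ is not a Kempe class. Write $N_2(u) := V(G) \setminus (N(u) \cup \{u\})$ for the second neighbourhood of $u$, which, since $G$ has diameter~$2$, coincides with the set of vertices at distance~$2$ from $u$. By the contrapositive of Lemma~\ref{lem:diam3}, for every pair of non-adjacent vertices $u, v$, no eligible pair in $P(v)$ has both vertices outside $N(u)$; equivalently, $N(v) \setminus N(u)$ contains no non-adjacent pair and so is a clique. Specialising to $v \in N_2(u)$, the set $N(v) \cap N_2(u)$ is a clique for every such $v$; by the standard fact that a graph in which every vertex has clique neighbourhood is a disjoint union of cliques, the induced subgraph $G[N_2(u)]$ is a disjoint union of cliques (and is non-empty since $G$ is non-complete).

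First I would handle the case where $G[N_2(u)]$ has at least two clique-components for some $u$. Pick $w_1, w_2$ in distinct components: they are non-adjacent, lie in $N_2(u)$, and by diameter~$2$ share a common neighbour $v'$, which must lie in $N(u)$ (otherwise $v' \in N_2(u)$ would force $w_1, w_2$ into $v'$'s clique component). Applying Lemma~\ref{lem:diam2} to $u$, $v'$, $(w_1, w_2)$ yields an eligible pair $(t_1, t_2) \in P(u)$ and a $K_{k-1}$ subgraph on a set $C$ weakly dominated by both $\{t_1, t_2\}$ and $\{w_1, w_2\}$. A degree count (each $c \in C$ has exactly $k-2$ neighbours in $C$, one in $\{t_1, t_2\}$ and one in $\{w_1, w_2\}$) gives $C \subseteq N_2(u)$, and the clique-neighbourhood condition combined with the weak domination forces (without loss of generality) $w_1$ to be adjacent to every vertex of $C$. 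The resulting clique $Q := C \cup \{w_1\}$ sits in a single component of the decomposition of $G[N_2(u)]$, and since $w_1$ has degree $k$ with $v'$ as its unique neighbour outside $N_2(u)$, $Q$ equals this whole component and has size exactly $k$. The external neighbours of $Q$ then lie entirely in $\{t_1, t_2, v'\}$; $3$-connectivity rules out $v' \in \{t_1, t_2\}$ (which would give a $2$-cut), so $\{t_1, t_2, v'\}$ is a $3$-cut separating the $K_k$ clique $Q$ from the rest of $G$. Lemma~\ref{lem:no3cutcuttingaclique} then gives that $C_G^k$ is a Kempe class, a contradiction.

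In the remaining case, $G[N_2(u)]$ is a single clique for every vertex $u$. This forces $s := |N_2(u)|$ to be a constant independent of $u$, with each vertex of $N_2(u)$ having exactly $k - s + 1$ neighbours in $N(u)$; the resulting graphs form a restricted family of highly structured examples (for instance, $s = 1$ gives the cocktail party graph $K_{(k/2+1) \times 2}$). To finish, I would verify the hypothesis of Lemma~\ref{lem:twopairsofmatchingneighbors} directly: for a carefully chosen vertex $u$ and non-adjacent pair $(w_1, w_2)$, for example of the form $(u, v)$ with $v \in N_2(u)$, one exhibits for each eligible pair $(t_1, t_2) \in P(u)$ an explicit $k$-colouring of $G$ in which $u, v$ share one colour and $t_1, t_2$ share another. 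This amounts to producing a $k$-colouring of the graph obtained by identifying $u$ with $v$ and $t_1$ with $t_2$, which is feasible owing to the constrained clique structure of $N_2(u)$ together with the fact that $N(u) \cup N(v)$ is $(k-1)$-colourable; verification proceeds by a short case analysis on $s$.

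The main obstacle is this last case. Case~1 is dispatched cleanly via Lemma~\ref{lem:no3cutcuttingaclique}, but Case~2 requires the detailed structural classification of graphs whose every second neighbourhood induces a single clique, together with explicit colouring constructions witnessing the hypothesis of Lemma~\ref{lem:twopairsofmatchingneighbors}.
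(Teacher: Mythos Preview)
Your reduction to the situation where every second neighbourhood induces a disjoint union of cliques is the same as the paper's, and your treatment of the case where some $G[N_2(u)]$ has at least two clique components is correct. It is, however, considerably heavier than the paper's argument for that case: the paper simply observes that if $x\in C_1$ and $y\in C_2$ lie in distinct clique components of $G[N_2(v)]$, then $x$ and $y$ must have identical neighbourhoods inside $N(v)$ (otherwise one finds an induced $P_3$ in some second neighbourhood and Lemma~\ref{lem:diam3} finishes), whence for any colouring the $(\alpha(x),\alpha(y))$-component containing $x$ lies entirely in $C_1$, so a single Kempe change puts $x$ and $y$ in the same colour class and Lemma~\ref{lem:matching} applies. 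Your route through Lemma~\ref{lem:diam2} and Lemma~\ref{lem:no3cutcuttingaclique} gets there too, but at the cost of a structural analysis that the paper avoids.

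The real gap is your Case~2, where every second neighbourhood is a single clique. What you have written there is a plan, not a proof: you propose to classify such graphs by the parameter $s=|N_2(u)|$ and then, for each eligible pair $(t_1,t_2)\in P(u)$, to exhibit a $k$-colouring of the double identification $G_{u+v,\,t_1+t_2}$. You do not carry this out, and it is not clear that it is short: eligible pairs in $P(u)$ can be numerous, and the identified graph can fail to be $(k-1)$-degenerate in ways that require genuine work. The paper takes a completely different route that avoids any classification. It fixes two colourings $\alpha,\beta$ with $\alpha(v)=\beta(v)=1$, lets $C=N_2(v)$, and first shows (Claim~\ref{secondclaim}) that neither colouring is Kempe equivalent to one in which colour~$1$ is absent from $C$; this yields $u,w\in C$ with $\alpha(u)=\beta(w)=1$. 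A second claim shows that every vertex of $\{u\}\cup N(u)\setminus\{w\}$ is locked in $\alpha$ and that $\alpha(w)$ is the unique colour repeated in $N(u)$. From this, when $|C|\ge 3$ one finds a two-vertex Kempe chain $\{u,z\}$ (and symmetrically $\{w,z\}$ in $\beta$) whose swap aligns both colourings at $v$ and $z$, so Lemma~\ref{lem:matching} finishes; the residual case $|C|=2$ is dispatched by a short degree count. None of this requires producing colourings of identified graphs or invoking Lemma~\ref{lem:twopairsofmatchingneighbors}. To complete your argument you would either have to execute the case analysis on $s$ in full, or replace your Case~2 with a direct Kempe-chain argument of this kind.
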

\begin{proof}  
If there are three vertices in the second neighbourhood of a vertex $v$ that induce a path, then the proposition follows immediately from Lemma~\ref{lem:diam3}. Therefore we can assume that the second neighbourhood of each vertex induces a disjoint union of cliques.

Assume that there is a vertex $v$ whose second neighbourhood includes two sets of vertices $C_1$ and $C_2$ that each induce a clique, and that there are no edges between the two sets.  Let $x$ and $y$ be vertices of $C_1$ and $C_2$ respectively.  If~$x$ is adjacent to a neighbour $z$ of $v$ that is not adjacent to $y$, then the second neighbourhood of $y$ contains an induced path on $v$, $z$ and $x$ and, again, we are done by Lemma~\ref{lem:diam3}.  Thus, by symmetry, the intersections of each of the neighbourhoods of $x$ and $y$ with $N(v)$ are the same and, repeating the argument, we must have that every vertex of $C_1$ and $C_2$ has the same set of neighbours within $N(v)$.  Let $\alpha$ be a $k$-colouring of $G$.  Suppose that $\alpha(x)=1$ and $\alpha(y)=2$. Note that the $(1,2)$-component that contains $x$ contains only vertices of $C_1$. Exchange the colours on this $(1,2)$-component and let $\beta$ be the resulting colouring. So $\beta(x)=\beta(y)=2$.  Thus, from any $k$-colouring, we can obtain by a single Kempe change a colouring in $C^k_G(x,y)$.  The proposition follows from Lemma~\ref{lem:matching}.  

Therefore we can assume that the second neighbourhood of each vertex induces a single clique.  Let $\alpha$ and $\beta$ be two $k$-colourings of $G$.  Let $v$ be a vertex, and let us denote by $C$ the second neighbourhood of $v$. Up to a Kempe change, we can assume that $\alpha(v) = \beta(v) = 1$.   To complete the proof, we assume that $\alpha$ and $\beta$ are not Kempe equivalent, and show that this leads to a contradiction.

\begin{claim} \label{secondclaim}
Neither $\alpha$ nor $\beta$ is Kempe equivalent to a colouring $\gamma$ such that $\gamma(v)=1$ and the colour 1 is not used on $C$.
\end{claim}

Suppose that there is such a colouring $\gamma$ that is Kempe equivalent to, say,~$\alpha$.  Let $x$ be the vertex in $C$ with $\beta(x)=1$ if such a vertex exists; otherwise let $x$ be any vertex in $C$.  In $\gamma$, $v$ is the only vertex in $G$ coloured~1 (since certainly there is no vertex in $N(v)$ coloured 1), so we can apply a trivial Kempe change to $x$ from $\gamma$ to obtain a colouring $\gamma'$ where $\gamma'(x)=1$.  If no vertex in $\beta$ is coloured 1, then we can use the same argument; that is, apply a trivial Kempe change to $x$ to obtain a colouring where $x$ is coloured~1.  So we may as well assume that $\beta(x)=1$, and thus, as $v$ and $x$ are coloured~1 in both $\gamma'$ and $\beta$, we have, by   Lemma~\ref{lem:matching} that $\gamma'$ and $\beta$, and so also $\alpha$ and~$\beta$, are Kempe equivalent; a contradiction that proves the claim.
  
  \smallskip

One thing that Claim~\ref{secondclaim} tells us is that $\alpha$ and $\beta$ are colourings where the colour 1 is used on $C$.  So let $u$ and $w$ be vertices in $C$ such that $\alpha(u)=1$ and $\beta(w)=1$.  If $u = w$, then Lemma~\ref{lem:matching} implies that $\alpha$ and $\beta$ are Kempe equivalent.  So, by assumption, we have $u \neq w$.

One more definition is required: given a colouring $\gamma$, a vertex $x$ is \emph{locked} if all the colours distinct from $\gamma(x)$ appear in its neighbourhood.  Notice that if $x$ is not locked, then we can apply a trivial Kempe change to $x$ from $\gamma$.

\begin{claim}\label{thirdclaim} 
Each vertex in $u \cup N(u) \setminus w$ is locked in $\alpha$. Moreover, only colour~$\alpha(w)$ appears twice in the neighbourhood of $u$.
\end{claim}
   First consider the $(1,\alpha(w))$-component of $\alpha$ containing $u$ and $w$. If this component does not contain $v$, then the Kempe change of this component from $\alpha$ gives us a colouring in which $w$ and $v$ are both coloured 1. By Lemma~\ref{lem:matching}, this colouring is Kempe equivalent to $\beta$, a contradiction. Thus $v$ must be in the $(1,\alpha(w))$-component.   Since no other neighbour of $w$ distinct from $u$ is coloured $1$ (every vertex in $G$ is a neighbour of $v$ or $u$), another neighbour $y$ of $u$ must be coloured with $\alpha(w)$.   If $u$ is not locked, then a trivial Kempe change of $u$ gives us a colouring in which 1 is not used on~$C$, contradicting Claim~\ref{secondclaim}.   Thus all the colours appear exactly once on the neighbourhood of $u$ except colour $\alpha(w)$ which appears twice.  If~$y$ is not locked, then a trivial Kempe change of $y$ returns us to the case where the $(1,\alpha(w))$-component of $\alpha$ containing $u$ and $w$ does not contain $v$.  And if a neighbour $z$ of $u$ not in $\{w ,y \}$ is not locked, then a trivial Kempe change of~$z$ returns us to the case where $u$ is not locked. The claim is proved.

\smallskip
\noindent {\bf Case 1:} \emph{$|C|\geq 3$.}

\smallskip
\noindent  Let $z \in C \setminus \{u, w\}$.  Clearly $u$ is the unique neighbour of $z$ coloured with $1$ in~$\alpha$ (since, again, every in $G$ is a neighbour of $v$ or $u$).  Similarly $w$ is the unique neighbour of $z$ coloured with $1$ in $\beta$. By Claim~\ref{thirdclaim}, $z$ is the unique neighbour of $u$ coloured $\alpha(z)$, and so $\{u,z\}$ is a Kempe chain in $\alpha$. Similarly,  noting that Claim~\ref{thirdclaim} also holds for $\beta$ with the roles of $u$ and $w$ interchanged, $\{w,z\}$ is a Kempe chain in $\beta$. By exchanging the colours on these Kempe chains, we obtain two colourings where $v$ and $z$ are each coloured 1.  Lemma~\ref{lem:matching} then implies that $\alpha$ and $\beta$ are Kempe equivalent, a contradiction.
 
\smallskip
\noindent {\bf Case 2:} \emph{$|C|=2$.}

\smallskip
\noindent  
So $G$ contains $v$, its $k$ neighbours, and $u$ and $w$.  Each of $u$ and $w$ are adjacent to all but one of the neighbours of $v$, so at least $k-2$ of the neighbours of~$v$ are adjacent to both $u$ and $w$; let this set of neighbours be denoted $S$. By Claim~\ref{thirdclaim}, in $\alpha$ a common neighbour $z$ of $u$ and $v$ is coloured $\alpha(w)$, so it follows that $S$ contains exactly $k-2$ vertices as it cannot contain $z$. As each vertex of $S$ is locked in $\alpha$ by Claim~\ref{thirdclaim} and has two neighbours, $u$ and $v$, coloured 1, they each have exactly one neighbour of each other colour.  Thus as $w$ and $z$ are coloured alike, and every vertex in $S$ is adjacent to $w$, no vertex in $S$ is adjacent to $z$.  But then the only vertices that can be adjacent to $z$ are $u$, $v$ and the other neighbour of $v$ that is not in $S$, which contradicts that $k\geq 4$. This completes Case 2, and the proof of the proposition.
\end{proof}

\section{Final Remarks}\label{sec:final}

Let us note an immediate corollary of the results on regular graphs.

\begin{corollary}\label{cor:final}
Let  $d$ and $k$ be positive integers, $d \geq k \geq 3$, and let $G$ be a connected graph with maximum degree $k$. If $G$ has a $k$-colouring, then $C_G^d$ is a Kempe class unless $d = k = 3$ and $G$ is the triangular prism.
\end{corollary}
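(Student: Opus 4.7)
The plan is to split the proof according to whether $d \geq k+1$ or $d = k$, invoking Lemma~\ref{lem:degeneratekempe} whenever possible and falling back on Theorem~\ref{thm:main} only in the genuinely new case. Since $G$ has maximum degree~$k$, any ordering of its vertices is a $k$-elimination ordering, so $G$ is $k$-degenerate. Therefore whenever $d \geq k+1$, Lemma~\ref{lem:degeneratekempe} applied with the $d$ of that lemma equal to $k$ and the $k$ of that lemma equal to our $d$ gives that $C_G^d$ is a Kempe class directly.

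It remains to handle $d = k$, which I would split further. First suppose $G$ is not $k$-regular. I would argue that $G$ is then $(k-1)$-degenerate: fix a vertex $v_0$ of degree at most $k-1$ in $G$ and consider an arbitrary induced subgraph $H$; if $v_0 \in V(H)$ then $v_0$ itself has degree at most $k-1$ in $H$, while if $v_0 \notin V(H)$ then by the connectedness of $G$ the first vertex of $V(H)$ encountered on a path from $v_0$ has a neighbour outside $V(H)$ and hence degree at most $k-1$ in $H$. With $(k-1)$-degeneracy in hand, Lemma~\ref{lem:degeneratekempe} (with its $d$ set to $k-1$ and its $k$ set to our $k$) finishes this subcase.

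The final subcase is $d = k$ together with $G$ being $k$-regular. Here I would observe that the hypothesis that $G$ admits a $k$-colouring rules out $G = K_{k+1}$, so $G$ is a connected non-complete $k$-regular graph. For $k \geq 4$ this is exactly the setting of Theorem~\ref{thm:main}, which yields the desired conclusion; for $k = 3$ the same conclusion follows from the result of Feghali, Johnson and Paulusma~\cite{FJP15} quoted in the introduction, with the sole exception of the triangular prism, which matches precisely the exception stated in the corollary.

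There is no real obstacle: the only step that requires more than quoting a previous result is the $(k-1)$-degeneracy of a connected non-regular graph of maximum degree~$k$, and that is a routine BFS/connectivity argument. The rest of the work is bookkeeping to make sure the exception $d = k = 3$ with $G$ the triangular prism is the only case that escapes both Lemma~\ref{lem:degeneratekempe} and Theorem~\ref{thm:main}.
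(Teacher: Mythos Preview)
Your proof is correct and follows essentially the same approach as the paper: split into the $(k-1)$-degenerate case (handled by Lemma~\ref{lem:degeneratekempe}) and the $k$-regular case (handled by Theorem~\ref{thm:main} for $k\ge 4$ and by \cite{FJP15} for $k=3$, leaving only the triangular prism). Your write-up is in fact a bit more careful than the paper's, since you explicitly separate off the case $d\ge k+1$ and dispose of it via $k$-degeneracy, whereas the paper's one-line argument leaves that step implicit.
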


\begin{proof}
A connected graph with maximum degree $k$ that is $k$-colourable is either $k$-regular, or $(k-1)$-degenerate but not complete.  The corollary thus follows from Theorem~\ref{thm:main} and~\cite[Theorem 1]{FJP15} and Lemma~\ref{lem:degeneratekempe} which prove the result for, respectively, regular graphs of degree greater than 3, 3-regular graphs, and $(k-1)$-degenerate graphs.
\end{proof}

It would also be interesting to find an upper bound on the diameter of~$\mathcal{K}_k(G)$ for graphs of bounded maximum degree.  In the case where the edge relation of the reconfiguration graph represents trivial Kempe changes,  this question has been well-studied; for example, see~\cite{BB13, BJLPP14, BP14, CHJ06b, FJP14}.  Notice that from the proofs of Theorem~\ref{thm:main} and~\cite[Theorem 1]{FJP15}, it suffices to determine an upper bound on the diameter of $\mathcal{K}_{d+1}(H)$, where $d\geq 2$ and $H$ is a $d$-degenerate graph, and we note that a trivial upper bound of~$O(d^{|V(H)|})$ can be easily derived from the proof of Lemma~\ref{lem:degeneratekempe}. We make the following conjecture in light of Cereceda's conjecture~\cite{luisthesis} on the diameter of reconfiguration graphs of colourings of graphs with bounded degeneracy.

\begin{conjecture}
Let $k$ and $d$ be positive integers such that $k \geq d+1$, and let $G$ be a $d$-degenerate graph on $n$ vertices. Then $\mathcal{K}_k(G)$ has diameter $O(n^2)$. 
\end{conjecture}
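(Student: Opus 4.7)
The plan is to prove the conjecture by induction on $n$, using a $d$-elimination ordering to fix the vertices to their target colors one at a time. Concretely, let $\pi = v_1, \ldots, v_n$ be a $d$-elimination ordering of $G$ (which exists by $d$-degeneracy), and write $G_i = G[\{v_1, \ldots, v_i\}]$. For two given $k$-colorings $\alpha$ and $\beta$, the strategy is to produce a sequence of Kempe changes transforming $\alpha$ into $\beta$ by processing the vertices in reverse order, maintaining the invariant that after processing $v_i$ the current coloring agrees with $\beta$ on $\{v_i, v_{i+1}, \ldots, v_n\}$. Executing each stage with $O(n)$ Kempe changes would give the targeted $O(n^2)$ total.

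The key steps for a single stage are the following. At the stage processing $v_i$, the current coloring $\gamma$ agrees with $\beta$ on the ``barrier'' $B_i := \{v_{i+1}, \ldots, v_n\}$, and the task is to reach a coloring that additionally sets $v_i$ to $\beta(v_i)$. Since $v_i$ has at most $d$ later neighbours and $\beta$ is proper, $\beta(v_i)$ does not appear on any later neighbour of $v_i$, so the barrier does not obstruct the target value. The first natural attempt is a single Kempe change: swap the $(\gamma(v_i), \beta(v_i))$-component of $v_i$ in $G$. This suffices whenever the component lies entirely in $G_i$. Otherwise, some later vertex $w \in B_i$ with $\gamma(w) = \gamma(v_i)$ is reached through the chain, and swapping would alter $w$'s colour. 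In that case preparatory Kempe changes within $G_i$ must first be used to reroute or shrink the chain so that its $G$-component no longer meets the barrier, after which one final Kempe change fixes $v_i$.

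The main obstacle is the preparatory step: bounding, by $O(n)$ per stage, the number of Kempe changes needed to insulate $v_i$'s chain from the barrier. The crux is a quantitative strengthening of Lemma~\ref{lem:degeneratekempe}: for any $d$-degenerate graph $H$ on $m$ vertices together with a designated set $B$ of vertices whose colours are fixed, any two $k$-colourings of $H$ that agree on $B$ are connected by at most $O(m)$ Kempe changes of $H$ whose swapped components avoid $B$. A nested induction on $i$ seems the most natural route, leveraging the fact that $G_i$ is itself $d$-degenerate and that the barrier imposes at most $d$ forbidden colours per vertex of $G_i$ adjacent to $B_i$, leaving lists of size at least $k - d \geq 1$ (so Lemma~\ref{lem:elimination} guarantees existence of the required intermediate colourings). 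The delicate point is to avoid any multiplicative overhead in the recursion: a constant-factor blow-up per lifted Kempe change would produce an exponential diameter bound rather than $O(n^2)$. We expect that overcoming this will require an amortized analysis, potentially combined with a potential function in the spirit of the known partial results on Cereceda's conjecture for reconfiguration graphs of $(d+2)$-colourings under trivial Kempe changes, and that this amortization is the principal technical hurdle.
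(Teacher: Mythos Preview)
This statement is a \emph{conjecture} in the paper, not a theorem: the authors pose it in the final remarks and do not prove it. There is therefore no proof in the paper to compare your proposal against.

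Your proposal is not a proof either, and you essentially say so yourself. You outline a natural vertex-by-vertex strategy along a $d$-elimination ordering and then correctly identify the crux: when you try to fix $v_i$ by a $(\gamma(v_i),\beta(v_i))$-swap, the Kempe chain in $G$ may reach into the barrier $B_i$, and you need ``preparatory'' Kempe changes to insulate it. You then state that what is needed is a quantitative strengthening of Lemma~\ref{lem:degeneratekempe} giving $O(m)$ Kempe changes that avoid a designated set $B$, and you note that a naive recursion would blow up exponentially and that some amortized argument would be required. None of this is carried out; the proposal stops precisely at the point where the real work begins.

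Two further remarks on why the gap is genuine. First, Kempe chains are components of $G$, not of $G_i$: a chain that looks harmless inside $G_i$ may, when swapped in $G$, recolour vertices of $B_i$ through edges crossing from $G_i$ to $B_i$, so ``Kempe changes within $G_i$'' is not a well-defined operation in $\mathcal{K}_k(G)$ unless you additionally guarantee the chain has no edges to $B_i$. Second, the analogous question for \emph{trivial} Kempe changes (single-vertex recolourings) with $k\geq d+2$ is Cereceda's conjecture, which is open in general; allowing arbitrary Kempe changes and only $k\geq d+1$ does not obviously make the problem easier, and the paper explicitly frames this conjecture as an analogue of Cereceda's. So what you have is a reasonable research plan, not a proof.
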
 

\section*{Acknowledgements}

The authors are indebted to Jes\'{u}s Salas for pointing out the implications of Theorem~\ref{thm:main} on the ergodicity  of the WSK algorithm. We thank Alan Sokal and two anonymous reviewers whose comments have helped to improve the presentation of the paper.

\bibliography{bibliography}{}
\bibliographystyle{abbrv}

\end{document}